\theoremstyle{plain}
\newtheorem{theorem}{Theorem}[section]
\newtheorem{proposition}[theorem]{Proposition}
\theoremstyle{definition}
\theoremstyle{remark}
\newtheorem{example}{Example}
\newcommand{\pa}{\partial}
\newcommand{\Q}{\mathcal{Q}}
\newcommand{\K}{\operatorname{K}}
\newcommand{\LL}{\operatorname{L}}
\renewcommand{\O}{\mathcal{O}}
\newcommand{\HH}{\mathcal{H}}
\newcommand{\D}{\mathcal{D}}
\renewcommand{\S}{\mathcal{S}}
\newcommand{\A}{\mathcal{A}}
\renewcommand{\L}{\mathcal{L}}
\newcommand{\U}{\mathcal{U}}
\renewcommand{\u}{\mathfrak{u}}
\newcommand{\T}{\operatorname{T}\!}
\renewcommand{\H}{\operatorname{H}\!}
\newcommand{\V}{\operatorname{V}\!}
\newcommand{\Ker}{\operatorname{Ker}}
\newcommand{\Ad}{\operatorname{Ad}}
\newcommand{\Tr}{\operatorname{Tr}}
\newcommand{\diag}{\operatorname{diag}}
\newcommand{\J}{\mathbf{J}}
\newcommand{\0}{{\mathbf 0}}
\newcommand{\II}{\mathcal{I}}
\newcommand{\JJ}{\mathcal{J}}
\newcommand{\ptexp}{\operatorname{exp_{+}\!}}
\newcommand{\ds}{\!\operatorname{d}\!s}
\newcommand{\dd}[1]{\frac{\operatorname{d}}{\operatorname{d}\!#1}\!}
\newcommand{\ket}[1]{|{#1}\rangle}
\newcommand{\bra}[1]{\langle #1 |}
\newcommand{\braket}[2]{\langle #1|#2\rangle}
\newcommand{\ketbra}[2]{|#1\rangle\langle #2|}
\newcommand{\obs}[1]{\hat{#1}}
\begin{document}
\title[A Symmetry Approach to Geometric Phase for Quantum Ensembles]{A Symmetry Approach to Geometric Phase for Quantum Ensembles}
\author{Ole Andersson}
\email{ole.andersson@fysik.su.se}
\author{Hoshang Heydari}
\address{Department of Physics, Stockholm University, 10691 Stockholm, Sweden}
\thanks{The authors thank Ingemar Bengtsson, Erik Sjöqvist, and Marie Ericsson for stimulating discussions. Thanks also go to Johan \"Ohman for suggesting improvements to the text.
HH's research is supported by the Swedish Research Council (VR), grant number 2008-5227.}
\date{\today}

\begin{abstract}
	We use tools from the theory of dynamical systems with symmetries to stratify Uhlmann's standard purification bundle and derive a new connection for mixed quantum states. 
	For unitarily evolving systems, this connection gives rise to the `interferometric' 
	geometric phase of Sj\"{o}qvist \emph{et al.}~[Phys.~Rev.~Lett. 85 2845 - 2849 (2000)],
	and for more generally evolving open systems it gives rise to the generalization of the interferometric geometric phase due to Tong \emph{et al.}~[Phys.~Rev.~Lett. 93 080405 (2004)].
\end{abstract}

\maketitle

\section{Introduction}
Thirty years ago, Berry \cite{Berry1984} demonstrated that a pure quantum state which undergoes an adiabatic evolution picks up a phase of purely geometric origin. Almost immediately, Simon \cite{Simon1983} put Berry's discovery into a mathematical context, identifying Berry's phase as the argument of a holonomy in a certain line bundle over the parameter manifold of the driving Hamiltonian. Shortly thereafter, Aharonov and Anandan \cite{Aharonov_etal1987} removed the adiabaticity requirement by defining the geometric phase as the argument of a holonomy in the generalized Hopf bundle.
Since then, an enormous number of papers on the subject have been published, and geometric phase has been recognized as a concept of central importance in quantum physics \cite{Shapere_etal1989,Markovski_etal1989,Zwanziger_etal1990, Bohm_etal2003, Chruscinski_etal2004, Sjoqvist2008}.

In general, an actual quantum system does not allow its state to be modeled by a single vector.
Because of entangling interaction with an environment, or imperfection in its preparation, 
the state has a mixed character and can at best be modeled by a density operator. 
Uhlmann was the first to consider the notion of geometric phase for mixed quantum states, using his elegant and generally applicable ``purification theory'' \cite{Uhlmann1986,Uhlmann1989,Uhlmann1991}. More recently, Sj\"oqvist~\emph{et al.}~\cite{Sjoqvist_etal2000} have derived a geometric phase for mixed quantum states in an experimental context. This geometric phase, which differs from that of Uhlmann \cite{Slater2001,Slater2002,Mousolou_etal2011}, was initially only defined for quantum systems which evolve unitarily. But Tong \emph{et al.}~\cite{Tong_etal2004} soon extended its definition to apply to quantum systems undergoing a more general, in fact arbitrary, evolution.

Although the relationship between the phases of Uhlmann and Sj\"oqvist-Tong \emph{et al.}~has been studied extensively, e.g. see \cite{Tidstrom_etal2003,Ericsson_etal2003,Rezakhani_etal2006,Shi_etal2006,Aberg_etal2007,Mousolou_etal2011}, a mathematical construction like that of Uhlmann, leading to the geometric phase of Sj\"oqvist-Tong \emph{et al.}, has until now been missing \cite{Chruscinski_etal2004, Shi_etal2006}. In this paper we conduct a detailed geometric analysis of Uhlmann's standard purification bundle using tools from the theory of dynamical systems with symmetries, and we derive a new connection for this bundle which gives rise to Sj\"oqvist-Tong \emph{et al.}'s geometric phase.

The outline of the paper is as follows. In Section \ref{operator spaces} we introduce several operator spaces with unitary group actions, as well as Uhlmann's standard purification bundle. In Section \ref{section:goemtricphase} we briefly discuss Uhlmann's, Sj\"oqvist \emph{et al.}'s, and Tong \emph{et al.}'s definitions of geometric phase. In Section \ref{section:dynamics} we apply the theory of dynamical systems with symmetries to Uhlmann's theory of purification and show how the standard purification bundle can be reduced to a principal fiber bundle with a connection that reproduces the geometric phase of Sj\"oqvist \emph{et al.} In Section \ref{section:open} we extend the construction initiated in Section \ref{section:dynamics} and derive a connection for the standard purification bundle which reproduces the geometric phase of Tong \emph{et al.} 

\section{Some operator spaces with unitary group actions}\label{operator spaces}
Let $\HH$ be a finite dimensional complex Hilbert space and $\L^\circ(\HH)$ be the space of invertible operators on $\HH$
equipped with the Hilbert-Schmidt Hermitian product. Furthermore, let $\S^\circ(\HH)$ be the space of invertible 
operators on $\HH$ with unit norm and $\D^\circ(\HH)$ be the space of invertible density operators on $\HH$.
The group of unitary operators on $\HH$, which we denote by $\U(\HH)$, acts on $\L^\circ(\HH)$ from the left and from the right by operator composition:
\begin{alignat}{2}
	&L_U : \L^\circ(\HH)\to\L^\circ(\HH), &\quad &L_U(\psi) = U\psi, \label{left} \\ 
	&R_U : \L^\circ(\HH)\to\L^\circ(\HH), &\quad &R_U(\psi) = \psi U, \label{right}
\end{alignat}
and on $\D^\circ(\HH)$ by left and right operator conjugation:
\begin{alignat}{2}
	&l_U : \D^\circ(\HH)\to\D^\circ(\HH), &\quad &l_U(\rho) = U\rho U^\dagger, \label{leftconj}\\
	&r_U : \D^\circ(\HH)\to\D^\circ(\HH), &\quad &r_U(\rho) = U^\dagger\rho U. \label{rightconj}
\end{alignat}
Moreover, the actions \eqref{left} and \eqref{right} preserve $\S^\circ(\HH)$, and for each $\psi$ 
in $\S^\circ(\HH)$ the composition $\psi\psi^\dagger$ is a density operator on $\HH$, say 
$\psi\psi^\dagger =\rho$, with $(U\psi)(U\psi)^\dagger=U\rho U^\dagger$ and $(\psi U)(\psi U)^\dagger=\rho$. 
Thus, the map 
\begin{equation}
	\Pi:\S^\circ(\HH)\to \D^\circ(\HH),\quad \Pi(\psi)=\psi\psi^\dagger \label{standard}
\end{equation}
is such that the following two diagrams are commutative:
\begin{center}
	\begin{tikzcd}
		\S^\circ(\HH) \arrow{d}[swap]{\Pi} \arrow{r}{L_U} & \S^\circ(\HH)\arrow{d}{\Pi} \\
		\D^\circ(\HH) \arrow{r}{l_U} & \D^\circ(\HH)
	\end{tikzcd}
	\qquad
	\begin{tikzcd}[column sep=small]
		\S^\circ(\HH)  \arrow{rd}[swap]{\Pi} \arrow{rr}{R_U} & {} & \S^\circ(\HH)\arrow{dl}{\Pi} \\
		{} & \D^\circ(\HH) & {}
	\end{tikzcd}
\end{center}
A key fact used in the present paper is that $\Pi$ is a principal fiber bundle with symmetry group $\U(\HH)$ whose right action is given by the restriction of \eqref{right} to $\S^\circ(\HH)$.
We recommend \cite{Kobayashi_etal1996III} as a general reference on the theory of principal fiber bundles.
	
We also need notation for four additional unitary group actions, or, rather, representations.
Thus let $\u(\HH)$ be the Lie algebra of $\U(\HH)$, that is, the Lie algebra consisting of all skew-Hermitian operators on $\HH$,
and $\u(\HH)^*$ be the space of all real-valued linear functions on $\u(\HH)$.
The left and right adjoint representations of $\U(\HH)$ on $\u(\HH)$ are
\begin{alignat}{2}
	U&\mapsto \Ad_U,&\quad &\Ad_U\xi = U\xi U^\dagger,\\
	U&\mapsto \Ad_{U^\dagger},&\quad &\Ad_{U^\dagger} \xi = U^\dagger \xi U,
\end{alignat}
respectively, and the left and right coadjoint representations of $\U(\HH)$ on $\u(\HH)^*$ are
\begin{alignat}{2}
	U&\mapsto \Ad^*_{U^\dagger}, &\quad &\Ad^*_{U^\dagger}(\mu) = \mu\circ\Ad_{U^\dagger},\\
	U&\mapsto \Ad^*_U,&\quad &\Ad^*_{U}(\mu) = \mu\circ \Ad_{U}.
\end{alignat}
	
Finally, evolving operators will be represented by parameterized curves which are
assumed to be smooth and defined on an unspecified interval $0\leq t\leq \tau$.
	
\section{Geometric phase for mixed quantum states}\label{section:goemtricphase}
Here we briefly describe the two main approaches to geometric phase for mixed states mentioned in the introduction. 
For issues concerning the experimental verifiability of these phases see \cite{Du_etal2003, Ericsson_etal2005, Klepp_etal2005, Ghosh_etal2006}. 

\subsection{The standard purification bundle and Uhlmann's geometric phase}
In this paper we adopt Uhlmann's terminology and call the operators in $\S^\circ(\HH)$ \emph{purifications}, and the principal fiber bundle $\Pi$
\emph{the standard purification bundle}. Thus, a purification of a density operator $\rho$ is a $\psi$ in $\S^\circ(\HH)$ such that $\Pi(\psi)=\rho$, and two purifications $\psi$ and $\phi$ purify the same density operator if and only if $\phi=\psi U$ for some unitary operator $U$ on $\HH$.
	
The vertical bundle of $\Pi$ is the vector bundle $\V\S^\circ(\HH)$ whose fibers are the kernels of the differential of $\Pi$,
and the horizontal bundle of $\Pi$ is the vector bundle $\H\S^\circ(\HH)$ whose fibers are the orthogonal complements of the fibers of $\V\S^\circ(\HH)$ with respect to the real part of the Hilbert-Schmidt product. 
The vertical and horizontal bundles split the tangent bundle of $\S^\circ(\HH)$ into a direct sum, $\T\S^\circ(\HH)=\V\S^\circ(\HH)\oplus\H\S^\circ(\HH)$; explicitly, the fibers at $\psi$ of these bundles are
\begin{align}
	\V_\psi\S^\circ(\HH)&=\{X\in \T_\psi\S^\circ(\HH):  X\psi^\dagger+\psi X^\dagger =0\},\label{vertical}\\
	\H_\psi\S^\circ(\HH)&=\{X\in \T_\psi\S^\circ(\HH):  X^\dagger\psi-\psi^\dagger X =0\}.\label{horizontal}
\end{align}
	
The horizontal bundle is a connection on $\Pi$.
Therefore, all curves in $\D^\circ(\HH)$ can be lifted to horizontal curves in $\S^\circ(\HH)$.
This means that for each curve $\rho(t)$ in $\D^\circ(\HH)$, there is a curve $\psi(t)$ in $\S^\circ(\HH)$ which projects onto $\rho(t)$ and whose velocity field is a curve in $\H\S^\circ(\HH)$.
Furthermore, this curve is unique if its initial point in the fiber over $\rho(0)$ is specified, which can be chosen arbitrarily.
The Uhlmann geometric phase of $\rho(t)$ is then defined as
\begin{equation}
	\Gamma_g[\rho(t)]=\arg\Tr\left(\psi(0)^\dagger\psi(\tau)\right).
\end{equation}
	
\subsection{Parallel transporting evolution operators and the geometric phase of Sj\"{o}qvist et al.}
Sj\"{o}qvist \emph{et al.}~\cite{Sjoqvist_etal2000} have proposed a geometric phase for unitarily evolving density operators which
is different from Uhlmann's geometric phase.
Consider a density operator $\rho$ with non-degenerate spectrum.
Following Sj\"{o}qvist \emph{et al.}~we say that 
a unitary evolution operator \emph{parallel transports} $\rho$ if the trajectories of the eigenstates of $\rho$ are horizontal curves in the sense of Aharonov and Anandan \cite{Aharonov_etal1987}.
In other words, $U(t)$ parallel transports $\rho$ if for every eigenvector $\ket{\psi_k}$ of $\rho$, the curve $\ket{\psi_k(t)}=U(t)\ket{\psi_k}$ satisfies $\braket{\psi_k(t)}{\dot{\psi_k}(t)}=0$.
For such a parallel transported density operator we define the geometric phase to be
\begin{equation}
	\gamma_{g}[\rho(t)]=\arg\Tr\left(U(\tau)\rho\right).\label{geophase}
\end{equation} 
	
If the density operator $\rho$ has a degenerate spectrum, the parallel transport condition needs to be slightly modified  \cite{Tong_etal2005}.
In this case we say that $U(t)$ is parallel transporting provided that $\braket{\psi_k(t)}{\dot{\psi_l}(t)}=0$ for any pair of eigenvectors
$\ket{\psi_k}$ and $\ket{\psi_l}$ of $\rho$ with the same eigenvalue.
The geometric phase is still defined by the formula \eqref{geophase}.
	
It is known that for unitarily evolving mixed states, $\Gamma_{g}$ and $\gamma_{g}$ don't necessarily coincide, see \cite{Slater2001,Slater2002}.
This is what one would expect since unitary evolution operators do not generate horizontal curves:
\begin{proposition}\label{skillnad}
A non-stationary unitary evolution operator will not generate a horizontal curve in the total space of the standard purification bundle.
\end{proposition}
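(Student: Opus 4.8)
The plan is to test the curve that $U(t)$ induces on the total space directly against the horizontality condition \eqref{horizontal}. A unitary evolution operator drives the density operator through $\rho(t) = U(t)\rho\,U(t)^\dagger = l_{U(t)}(\rho)$, and by the first commutative diagram of Section~\ref{operator spaces} this lifts to the left action $L_{U(t)}$ on purifications. So I would fix any purification $\psi_0 \in \S^\circ(\HH)$ and consider the curve $\psi(t) = U(t)\psi_0$. Since $\|U(t)\psi_0\|^2 = \Tr(\psi_0^\dagger U(t)^\dagger U(t)\psi_0) = \Tr(\psi_0^\dagger\psi_0)$, this curve indeed stays in $\S^\circ(\HH)$, and its velocity is $\dot\psi(t) = \dot U(t)\psi_0$. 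The whole question then reduces to deciding when this velocity field is horizontal; note that a genuinely stationary (constant) $U$ gives $\dot\psi\equiv 0$, which is trivially horizontal, so the qualifier in the statement is exactly what one needs.

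Next I would substitute $X = \dot U(t)\psi_0$ and $\psi = U(t)\psi_0$ into the defining equation $X^\dagger\psi - \psi^\dagger X = 0$ of $\H_\psi\S^\circ(\HH)$, obtaining $\psi_0^\dagger\big(\dot U^\dagger U - U^\dagger\dot U\big)\psi_0 = 0$. Differentiating the unitarity relation $U^\dagger U = \1$ yields $\dot U^\dagger U = -\,U^\dagger\dot U$, so the bracket collapses and the condition becomes $\psi_0^\dagger\,\big(U^\dagger\dot U\big)\,\psi_0 = 0$. Setting $A(t) := U(t)^\dagger\dot U(t)$, which is skew-Hermitian, the sandwich $\psi_0^\dagger A\psi_0$ is again skew-Hermitian; the horizontality requirement that $\psi^\dagger X$ be Hermitian therefore forces $\psi_0^\dagger A(t)\psi_0 = 0$.

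The place where the argument is clinched—and the only point at which a structural feature of the bundle enters—is the invertibility of purifications. Since $\psi_0$ and $\psi_0^\dagger$ are invertible, $\psi_0^\dagger A(t)\psi_0 = 0$ holds if and only if $A(t) = U(t)^\dagger\dot U(t) = 0$, i.e.\ $\dot U(t) = 0$. Hence $\dot\psi(t)$ is horizontal precisely at the instants where $U$ is momentarily at rest, and fails to be horizontal at every $t$ with $\dot U(t)\neq 0$; a non-stationary $U(t)$ can therefore generate no horizontal curve, and this conclusion is independent of the chosen initial fiber point $\psi_0$. I expect no serious obstacle: the computation is short. The only care needed is to read off horizontality at the level of the sandwiched skew-Hermitian operator rather than prematurely cancelling $\psi_0$, since it is exactly the invertibility of the purification that upgrades the vanishing of $\psi_0^\dagger A\psi_0$ to the operator identity $\dot U = 0$.
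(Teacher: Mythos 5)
Your proposal is correct and follows essentially the same route as the paper: you write the velocity as $\dot U(t)\psi_0$, plug it into the horizontality condition \eqref{horizontal}, use skew-Hermiticity of $U^\dagger\dot U$ (the paper's $\xi(t)$) to collapse the condition to $\psi_0^\dagger\, U^\dagger\dot U\,\psi_0=0$, and then invoke invertibility of the purification to conclude $\dot U=0$. The only differences are cosmetic (your explicit check that the curve stays in $\S^\circ(\HH)$ and the remark about the stationary case).
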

\begin{proof}
Assume that $\psi(t)=U(t)\psi$ is a horizontal curve in $\S^\circ(\HH)$.
Let $\xi(t)$ be the curve of skew-Hermitian operators defined by $\dot{U}(t)=U(t)\xi(t)$.
Then $\dot{\psi}(t)=U(t)\xi(t)\psi$, and according to \eqref{horizontal},
\begin{equation}
	\begin{split}
		0
		&=\dot{\psi}(t)^\dagger\psi(t)-\psi(t)^\dagger \dot{\psi}(t)\\
		&=\psi^\dagger \xi(t)^\dagger U(t)^\dagger U(t) \psi-\psi^\dagger U(t)^\dagger U(t) \xi(t)\psi\\
		&=-2\psi^\dagger \xi(t)\psi.
	\end{split}
\end{equation}
But then $\xi(t)$ vanishes identically because $\psi$ is invertible.
\end{proof}
\noindent The following example by Slater illustrates the fact that the geometric phases of Uhlmann and Sj\"oqvist \emph{et al.}~in general are different.
	
\begin{example}[Slater \cite{Slater2002}]\label{difference}
Slater considered a mixed qubit state which in Bloch ball coordinates is represented by
\begin{align}
	\rho=\frac{1}{2}(1+r\mathbf{r}\cdot\boldsymbol{\sigma}),\quad r\mathbf{r}=r(\sin\theta\cos\phi,\sin\theta\sin\phi,\cos\theta)^T,
\end{align}
and showed that if $\rho$ is initially such that $\mathbf{r}=(0,0,1)^T$, and is unitarily driven in such a way that the trajectory of the the Bloch vector $\mathbf{r}$ is a geodesic triangle on the Bloch sphere with vertices at $(0,0)$, $(\theta_1,\phi_1)$, and $(\theta_2,\phi_2)$, then the ratio of the tangents of the geometric phases of Sj\"oqvist \emph{et al.}~and Uhlmann is
\begin{equation}
	\frac{\tan\gamma_g[\rho(t)]}{\tan\Gamma_g[\rho(t)]}=1+\frac{4-10r^2+6r^4}{r^2((1+\cos\theta_1)(1+\cos\theta_2)+\cos(\phi_1-\phi_2)\sin\theta_1\sin\theta_2)}.
\end{equation}
Slater also showed that if, instead, $\rho$ is driven in such a way that $\mathbf{r}$ is rotated once about $\mathbf{n}=(0,\sin\xi,\cos\xi)$,
then 
\begin{equation}
	\frac{\tan\gamma_g[\rho(t)]}{\tan\Gamma_g[\rho(t)]}=\frac{\pi\chi \tan(\pi\cos\xi)\coth(\pi\chi)}{\pi\cos\xi},
\end{equation}
where
\begin{equation}
	\chi=\sqrt{\frac{r^2-2-r^2\cos(2\xi)}{2}}.
\end{equation}
In each case the parameters can be chosen to make the ratio different from $1$.
\end{example}
	
\subsection{Geometric phase for open quantum systems}
Uhlmann's geometric phase is defined for arbitrarily evolving quantum systems
while, in its original form, the definition of geometric phase by Sj\"oqvist \emph{et al.}~only applies to unitarily evolving systems. This limitation was overcome by Tong \emph{et al.}~\cite{Tong_etal2005} who extended the definition \eqref{geophase} to include quantum systems with arbitrary evolution. Here we describe this extension.
	
Consider a curve $\rho(t)$ in $\D^\circ(\HH)$ modeling the evolution of a state represented by a density operator $\rho=\rho(0)$, and let
\begin{equation}
    \rho(t)=\sum_k p_k(t)\ketbra{\psi_k(t)}{\psi_k(t)}
\end{equation}
be a spectral decomposition of $\rho(t)$ which is such that the normalized eigenstates $\ket{\psi_k(t)}$ vary smoothly with time.
We shall say that $\rho$ is parallel transported (by whatever the evolution operator of the system may be) if, for every $t$, the eigenstates satisfy the parallelism condition 
\begin{equation}\label{pc}
	p_k(t)=p_l(t)\implies \braket{\psi_k(t)}{\dot{\psi_l}(t)}=0.
\end{equation}
In this case, we define the geometric phase of $\rho$ to be
\begin{equation}\label{opengp}
	\gamma_g[\rho(t)]=\arg\sum_k\sqrt{p_k(0)p_k(\tau)}\braket{\psi_k(0)}{\psi_k(\tau)}.
\end{equation}
For systems where the eigenvalues $p_k$ do not vary over time, \eqref{opengp} reduces to \eqref{geophase}.
	
\section{Hamiltonian dynamics in the standard purification bundle and geometric phase for unitarily evolving quantum states}\label{section:dynamics}
In this section we show how the standard purification bundle can be reduced to a bundle in which parallel transporting unitary evolution operators generate horizontal curves of purifications. We also reproduce the geometric phase of Sj\"oqvist \emph{et al.}
	
\subsection{Right unitarily symmetric Hamiltonian dynamics in the standard purification bundle}
The imaginary part of the Hilbert-Schmidt product, multiplied by $2\hbar$, is a symplectic form on $\L^\circ(\HH)$,
\begin{equation}
	\Omega(X,Y)=-i\hbar\Tr(X^\dagger Y-Y^\dagger X),
\end{equation}
and the action \eqref{right} is symplectic,
\begin{equation}
	R_U^*\Omega =\Omega.
\end{equation}
A \emph{right unitarily symmetric Hamiltonian dynamical system} on $\L^\circ(\HH)$ is an evolution equation of the form $\dot{\psi}=X_{H}(\psi)$, where $X_H$ is the Hamiltonian vector field of a real-valued function $H$ on $\L^\circ(\HH)$ which is constant along the orbits of the action \eqref{right}.
For such systems, if $\psi(t)$ is the solution that emanates from $\psi$, $\psi(t)U$ is the solution that emanates from $\psi U$.

\begin{example}[The Schr\"odinger equation]\label{shroedinger}
Suppose that $\obs{H}$ is a Hermitian operator on $\HH$.
Then $\obs{H}$ defines a Hermitian operator on $\L^\circ(\HH)$, which sends $\psi$ to the composition $\obs{H}\psi$. 
Let $H$ be the associated expectation value function, $H(\psi)=\Tr(\psi^\dagger\obs{H}\psi)$. Then 
\begin{equation}
	X_H(\psi)=\frac{1}{i\hbar}\obs{H}\psi.
\end{equation}
Thus the dynamical system of $X_H$ is Schr\"odinger's equation. Clearly, $H$ is right unitarily invariant.
\end{example} 

Next we define a momentum map for the right action \eqref{right}. 
Thus for each $\xi$ in $\u(\HH)$ define a smooth function $J_\xi$ on $\L^\circ(\HH)$ by 
$J_\xi(\psi)=i\hbar\Tr(\psi^\dagger\psi\xi)$.
The Hamiltonian vector field of $J_\xi$ is the fundamental vector field of the right $\U(\HH)$-action corresponding to $\xi$.
That is,
\begin{equation}
	dJ_\xi(X)=\Omega(\hat{\xi},X),
\end{equation}
where
\begin{equation}
	\hat{\xi}(\psi)=\dd{t}\Big[R_{\exp(t\xi)}(\psi)\Big]_{t=0}=\psi\xi.
\end{equation}
Moreover, the assignment $\xi\mapsto J_\xi$ is unitarily equivariant, $J_{\Ad_{U}\xi}=J_\xi\circ R_U$. Thus
\begin{equation}
	\J:\L^\circ(\HH) \to \u(\HH)^*,\quad \J(\psi)\xi=J_\xi(\psi)=i\hbar\Tr(\psi^\dagger\psi\xi)
\end{equation}
is a coadjoint-equivariant momentum map for the right action of $\U(\HH)$ on $\L^\circ(\HH)$.
For right unitarily symmetric Hamiltonian dynamical systems we have the following Noetherian theorem:
\begin{theorem}\label{noether}
The flow of each right unitarily symmetric Hamiltonian dynamical system on $\L^\circ(\HH)$ preserves the level sets of $\J$.
\end{theorem}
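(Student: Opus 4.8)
The plan is to prove the statement as a symplectic version of Noether's theorem: I would show that each component $J_\xi$ of the momentum map is a constant of motion along the flow of $X_H$, so that $\J$ itself is constant along every integral curve and its level sets are consequently invariant.

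First I would fix a sign convention for the Hamiltonian vector field, say $dH(Y)=\Omega(X_H,Y)$, consistent with the convention $dJ_\xi(Y)=\Omega(\hat{\xi},Y)$ already recorded for the momentum map, which identifies $\hat{\xi}$ as the Hamiltonian vector field of $J_\xi$. Letting $\psi(t)$ be the integral curve of $X_H$ emanating from $\psi$, so that $\dot{\psi}(t)=X_H(\psi(t))$, I would compute the time derivative of $J_\xi$ along it by the chain rule:
\[
	\frac{d}{dt}J_\xi(\psi(t)) = dJ_\xi(X_H) = \Omega(\hat{\xi},X_H) = -\Omega(X_H,\hat{\xi}) = -dH(\hat{\xi}).
\]

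The key observation, and the place where the symmetry hypothesis enters, is that the assumption that $H$ is constant along the orbits of the right action \eqref{right} translates exactly into the vanishing of $dH(\hat{\xi})$. Indeed, $\hat{\xi}(\psi)=\psi\xi$ is by definition the velocity field of the one-parameter orbit $t\mapsto R_{\exp(t\xi)}(\psi)$, and since $H$ is constant along this curve, differentiating gives $dH(\hat{\xi})=0$ for every $\xi$ in $\u(\HH)$. Combined with the display above this yields $\frac{d}{dt}J_\xi(\psi(t))=0$, so each $J_\xi$ is conserved. As $\xi$ ranges over all of $\u(\HH)$, the functionals $J_\xi$ determine the element $\J(\psi)\in\u(\HH)^*$ completely, and hence $\J(\psi(t))=\J(\psi)$ for all $t$; the flow therefore maps each level set of $\J$ into itself.

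I do not anticipate a serious obstacle, since the result is the standard momentum-map Noether theorem and the computation is short. The only point requiring care is the bookkeeping of signs, specifically using the antisymmetry of $\Omega$ to convert $\Omega(\hat{\xi},X_H)$ into $-dH(\hat{\xi})$, which is precisely the step that makes the symmetry hypothesis bite. If one prefers to avoid committing to an orientation for the Hamiltonian vector field, one may equivalently write $\frac{d}{dt}J_\xi=\{J_\xi,H\}=-\{H,J_\xi\}=-dH(\hat{\xi})$, in which the antisymmetry of the Poisson bracket makes the cancellation manifest and the same conclusion follows.
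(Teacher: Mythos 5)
Your proposal is correct and follows essentially the same route as the paper: both reduce the claim to the identity $dJ_\xi(X_H)=\Omega(\hat{\xi},X_H)=-dH(\hat{\xi})=0$, using the right-invariance of $H$ to kill $dH(\hat{\xi})$, and then conclude that $\J$ is constant along the flow (equivalently, that $X_H$ is tangent to the level sets). The extra care you take with the sign convention and the integral-curve formulation is sound but does not change the argument.
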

\begin{proof}
Suppose that $H$ is a real-valued function on $\L^\circ(\HH)$ which is constant along the orbits of the right action by $\U(\HH)$.
Then $d\J(X_H)\xi=-dH(\hat{\xi})=0$ for every $\xi$ in $\u(\HH)$. Thus $X_H$ is everywhere tangential to the level sets of $\J$.
\end{proof}
\noindent Next we verify that the level sets of $\J$ are smooth manifolds:
\begin{proposition}\label{smooth}
Every $\psi$ in $\L^\circ(\HH)$ is a regular point of $\J$.
\end{proposition}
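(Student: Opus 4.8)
The claim is that $d\J_\psi$ is surjective onto $\u(\HH)^*$ at every point $\psi$, which is precisely the condition for $\psi$ to be a regular point of $\J$. Since $\L^\circ(\HH)$ is an open subset of the space of all operators on $\HH$, the plan is to identify $\T_\psi\L^\circ(\HH)$ with that operator space and compute the differential directly: for a tangent vector $X$ and $\xi\in\u(\HH)$,
\begin{equation*}
 d\J_\psi(X)\xi=\dd{s}\Big[i\hbar\Tr\big((\psi+sX)^\dagger(\psi+sX)\xi\big)\Big]_{s=0}=i\hbar\Tr\big((X^\dagger\psi+\psi^\dagger X)\xi\big).
\end{equation*}

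The key observation I would exploit is that $A:=X^\dagger\psi+\psi^\dagger X$ is Hermitian, so that $i\hbar A$ is skew-Hermitian. I would then identify $\u(\HH)^*$ with $\u(\HH)$ by means of the real pairing $\langle\eta,\xi\rangle=\Tr(\eta\xi)$, which is nondegenerate (in fact negative definite) on skew-Hermitian operators; under this identification $d\J_\psi(X)$ corresponds to the skew-Hermitian operator $i\hbar A$. Surjectivity of $d\J_\psi$ thereby reduces to the statement that, as $X$ ranges over all operators, $A=X^\dagger\psi+\psi^\dagger X$ ranges over all Hermitian operators.

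This final reduction is where invertibility of $\psi$ enters, and it makes the argument immediate: given any Hermitian $A$, the choice $X=\tfrac12(\psi^\dagger)^{-1}A$ yields $X^\dagger\psi=\tfrac12 A$ and $\psi^\dagger X=\tfrac12 A$, so that $X^\dagger\psi+\psi^\dagger X=A$. Equivalently, to hit a prescribed covector $\mu=\Tr(C\,\cdot\,)$ with $C$ skew-Hermitian, one would set $A=-\tfrac{i}{\hbar}C$ (which is Hermitian) and take the same $X$. I expect no genuine obstacle here; the only point requiring care is the bookkeeping of the dual space—correctly recognizing that the image of $d\J_\psi$ consists precisely of those functionals represented by skew-Hermitian operators of the form $i\hbar A$—after which the solvability of $X^\dagger\psi+\psi^\dagger X=A$ follows at once from the invertibility of $\psi$.
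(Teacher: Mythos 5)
Your proof is correct, but it takes a different route from the paper's. The paper argues by contradiction through duality: if $d\J_\psi$ were not surjective, Riesz' lemma would produce a non-zero $\xi\in\u(\HH)$ annihilating the image, i.e.\ $d\J_\psi(X)\xi=\Omega(\hat\xi(\psi),X)=0$ for all $X$; non-degeneracy of the symplectic form $\Omega$ then forces $\hat\xi(\psi)=\psi\xi=0$, whence $\xi=0$ by invertibility of $\psi$ --- a contradiction. This is the standard ``momentum maps are submersions where the action is free'' argument, and it leans on the symplectic structure already set up for the Noether theorem. You instead compute $d\J_\psi(X)\xi=i\hbar\Tr\bigl((X^\dagger\psi+\psi^\dagger X)\xi\bigr)$ explicitly, identify $\u(\HH)^*$ with $\u(\HH)$ via the (negative definite, hence non-degenerate) trace pairing, and exhibit an explicit preimage $X=\tfrac12(\psi^\dagger)^{-1}A$ for any Hermitian $A$. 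Your identification bookkeeping checks out: $A\mapsto i\hbar A$ is a bijection from Hermitian to skew-Hermitian operators, and $X^\dagger\psi=\psi^\dagger X=\tfrac12A$ as claimed. What your approach buys is concreteness and independence from the symplectic machinery --- it is pure linear algebra and in fact yields a right inverse of $d\J_\psi$; what the paper's approach buys is brevity and the conceptual point that regularity of the momentum map is a direct consequence of the freeness of the right $\U(\HH)$-action on invertible operators together with non-degeneracy of $\Omega$. Both arguments use the invertibility of $\psi$ at exactly one step: the paper to conclude $\psi\xi=0\Rightarrow\xi=0$, you to form $(\psi^\dagger)^{-1}$.
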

\begin{proof}
Recall that $\psi$ in $\L^\circ(\HH)$ is a regular point of $\J$ if the differential of $\J$ maps $\T_\psi\L^\circ(\HH)$ onto $\u(\HH)^*$. Now, assume that this is not the case. 
According to Riesz' lemma \cite[Th II.4, p 43]{Reed_etal1980}, there exists a non-zero $\xi$ in $\u(\HH)$ such that $d\J(X)\xi=0$ for all $X$ in $\T_\psi\L^\circ(\HH)$. However, since $d\J(X)\xi=\Omega(\hat{\xi}(\psi),X)$ and $\Omega$ is non-degenerate, this implies the contradictory conclusion that $\xi=0$.
\end{proof}
	
It follows from Theorem \ref{noether} and Example \ref{shroedinger}, or the very definition of $\J$ for that matter, 
that if $U(t)$ is a unitary evolution operator and $\psi$ is an invertible operator on $\HH$, the curve 
$\psi(t)=U(t)\psi$ will be completely contained within the level set $\J^{-1}(\J(\psi))$, see Figure \ref{figure:levelset}.
In the next paragraph we will show that if $\psi$ purifies $\rho$, then this level set is the total 
space of a principal fiber bundle over the space of density operators on $\HH$ which are 
unitarily equivalent to $\rho$, that is, the orbit of $\rho$ under the action \eqref{leftconj}.
\begin{figure}
\centering
\includegraphics[width=0.55\textwidth]{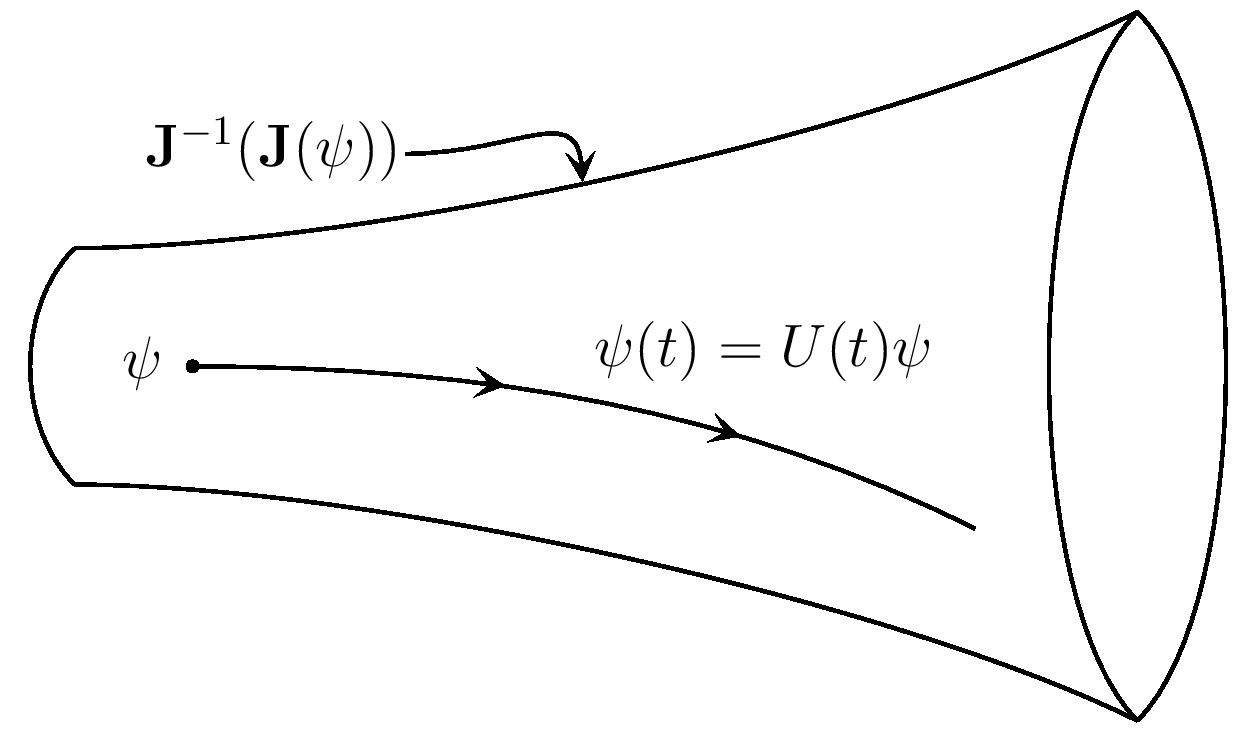}
\caption{The trajectory of a unitarily evolving invertible operator on $\HH$ is contained in a single level set of $\J$.}
\label{figure:levelset}
\end{figure}
We end this paragraph by showing that the left action \eqref{left} is transitive on the level sets of $\J$.
\begin{proposition}
Let $\psi$ and $\phi$ be invertible operators on $\HH$.
If $\J(\psi)=\J(\phi)$, then $\phi=U\psi$ for some $U$ in $\U(\HH)$.
\end{proposition}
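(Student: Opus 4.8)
The plan is to unwind the definition of $\J$ and reduce the hypothesis $\J(\psi)=\J(\phi)$ to the operator identity $\psi^\dagger\psi=\phi^\dagger\phi$, after which $U=\phi\psi^{-1}$ does the job. First I would note that $\J(\psi)=\J(\phi)$ means, by the very definition of the momentum map, that $i\hbar\Tr(\psi^\dagger\psi\,\xi)=i\hbar\Tr(\phi^\dagger\phi\,\xi)$ for every $\xi$ in $\u(\HH)$; equivalently,
\[
\Tr\bigl((\psi^\dagger\psi-\phi^\dagger\phi)\xi\bigr)=0
\]
for all skew-Hermitian $\xi$.

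Next I would argue that this forces $A:=\psi^\dagger\psi-\phi^\dagger\phi$ to vanish. The operator $A$ is Hermitian, being a difference of two Hermitian operators. The key point is that the real trace pairing is non-degenerate on the Hermitian operators: writing the general skew-Hermitian operator as $\xi=iB$ with $B$ Hermitian, the condition above becomes $\Tr(AB)=0$ for all Hermitian $B$. Taking $B=A$ gives $\Tr(A^2)=0$, and since $A$ is Hermitian $A^2$ is positive semidefinite, so its vanishing trace forces $A=0$. Hence $\psi^\dagger\psi=\phi^\dagger\phi$.

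Finally, since $\psi$ and $\phi$ are invertible I would set $U=\phi\psi^{-1}$, so that $\phi=U\psi$ automatically, and verify
\[
U^\dagger U=(\psi^\dagger)^{-1}\phi^\dagger\phi\,\psi^{-1}=(\psi^\dagger)^{-1}\psi^\dagger\psi\,\psi^{-1}=\1.
\]
In finite dimensions $U^\dagger U=\1$ already implies that $U$ is unitary, which completes the argument. I expect the only substantive step to be the passage from $\J(\psi)=\J(\phi)$ to $\psi^\dagger\psi=\phi^\dagger\phi$, which rests on the non-degeneracy of the trace pairing; everything after that is the standard uniqueness-of-polar-decomposition observation, and the invertibility of $\psi$ is precisely what lets us form $\psi^{-1}$ and avoid the subtleties that arise for singular operators.
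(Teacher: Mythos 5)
Your proposal is correct and follows essentially the same route as the paper: reduce $\J(\psi)=\J(\phi)$ to $\psi^\dagger\psi=\phi^\dagger\phi$ and then exhibit the unitary explicitly (your $U=\phi\psi^{-1}$ coincides with the paper's $(\phi^\dagger)^{-1}\psi^\dagger$ once $\psi^\dagger\psi=\phi^\dagger\phi$ is known). The only difference is that you spell out, via the non-degeneracy of the trace pairing on Hermitian operators, the implication that the paper simply asserts as an equivalence---a worthwhile bit of added detail, but not a different argument.
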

\begin{proof}
The condition $\J(\psi)=\J(\phi)$ is equivalent to $\psi^\dagger\psi=\phi^\dagger\phi$, which in turn implies that $U=(\phi^\dagger)^{-1}\psi^\dagger$ is unitary and $\phi=U\psi$.
\end{proof} 
	
\subsection{Reduced standard purification bundles}
A unitarily evolving invertible density operator remains in a single orbit of the left conjugation 
action of $\U(\HH)$ on $\D^\circ(\HH)$. We write $\D(\rho)$ for the orbit that contains $\rho$. 
Moreover, we call two density operators that belong to the same orbit \emph{isospectral} since they have the same eigenvalues.
	
For each purification $\psi$ let $\S(\psi)$ be the level set $\J^{-1}(\J(\psi))$. 
Because all the members of $\S(\psi)$ have unit norm it follows from Proposition \ref{smooth} 
that $\S(\psi)$ is a immersed submanifold of $\S^\circ(\HH)$. Moreover, by Proposition \ref{noether}, 
the flow of every right unitarily symmetric Hamiltonian dynamical system on $\L^\circ(\HH)$ preserves 
$\S(\psi)$. The main result of this paragraph says that if  $\psi$ purifies $\rho$, the restriction 
of the standard purification $\Pi$ to $\S(\psi)$ is a principal fiber bundle over $\D(\rho)$. 
\begin{proposition}\label{onto}
If $\psi$ purifies $\rho$, then $\Pi$ maps $\S(\psi)$ onto $\D(\rho)$.
\end{proposition}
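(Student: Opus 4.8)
The plan is to identify the level set $\S(\psi)=\J^{-1}(\J(\psi))$ with a single orbit of the left action \eqref{left}, and then to transport that orbit through $\Pi$ by means of the commutativity of $\Pi$ with $L_U$ and $l_U$. Since $\J(\phi)\xi=i\hbar\Tr(\phi^\dagger\phi\,\xi)$ depends on $\phi$ only through the positive operator $\phi^\dagger\phi$, membership $\phi\in\S(\psi)$ is equivalent to the single operator equation $\phi^\dagger\phi=\psi^\dagger\psi$. This reformulation is what ties the level set to the left action.

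First I would establish the orbit identity $\S(\psi)=\{U\psi: U\in\U(\HH)\}$ by proving two inclusions. The inclusion $\{U\psi\}\subseteq\S(\psi)$ is an immediate computation, since $L_U$ leaves $\J$ invariant: $\J(U\psi)\xi=i\hbar\Tr(\psi^\dagger U^\dagger U\psi\,\xi)=i\hbar\Tr(\psi^\dagger\psi\,\xi)=\J(\psi)\xi$. The reverse inclusion $\S(\psi)\subseteq\{U\psi\}$ is exactly the transitivity statement proved in the preceding proposition: if $\J(\phi)=\J(\psi)$ then $\phi=U\psi$ for some $U$ in $\U(\HH)$. Together these give that $\S(\psi)$ is precisely the left $\U(\HH)$-orbit of $\psi$.

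Next I would push this orbit through $\Pi$ using the commutative diagram, which gives $\Pi\circ L_U=l_U\circ\Pi$, together with $\Pi(\psi)=\rho$. For each $U$ this yields $\Pi(U\psi)=l_U(\rho)=U\rho U^\dagger$, so $\Pi(\S(\psi))=\{U\rho U^\dagger: U\in\U(\HH)\}$, which is by definition the conjugation orbit $\D(\rho)$. Reading this equality both ways simultaneously establishes surjectivity onto $\D(\rho)$ and the fact that $\Pi$ maps $\S(\psi)$ into $\D(\rho)$.

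The argument is largely an assembly of facts already in hand, so I do not expect a serious obstacle; the one point that carries genuine content is the containment $\Pi(\S(\psi))\subseteq\D(\rho)$, i.e.\ that $\phi^\dagger\phi=\psi^\dagger\psi$ forces $\phi\phi^\dagger$ to be isospectral to $\rho=\psi\psi^\dagger$. Above this is obtained from the transitivity proposition, but it can also be seen directly: writing the polar decomposition $\phi=V|\phi|$ with $V$ unitary (possible because $\phi$ is invertible) gives $\phi\phi^\dagger=V(\phi^\dagger\phi)V^\dagger$, so $\phi\phi^\dagger$ and $\phi^\dagger\phi=\psi^\dagger\psi$ share a spectrum, and $\psi^\dagger\psi$ is in turn isospectral to $\psi\psi^\dagger$ for the same reason. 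The surjectivity half, by contrast, needs only the easy $L_U$-invariance of $\J$, realized by the explicit lift $\phi=U\psi$ of $\sigma=U\rho U^\dagger$.
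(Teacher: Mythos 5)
Your proof is correct and follows essentially the same route as the paper: both halves agree in substance, and the surjectivity part is identical (lift $\sigma=U\rho U^\dagger$ to $U\psi\in\S(\psi)$ and apply $\Pi$). The only difference is in the containment $\Pi(\S(\psi))\subseteq\D(\rho)$, which you obtain from the preceding transitivity proposition ($\phi=U\psi$, hence $\Pi(\phi)=U\rho U^\dagger$) while the paper argues via the isospectrality of $\phi\phi^\dagger$ and $\phi^\dagger\phi=\psi^\dagger\psi$; your variant is marginally more self-contained, as it avoids the implicit step that isospectral invertible density operators lie in the same conjugation orbit.
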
 
	
\begin{proof}
Let $\phi$ be a purification in $\S(\psi)$.
Then $\Pi(\phi)$ and $\rho$ are isospectral as $\J(\phi) = \J(\psi)$ is equivalent to $\phi^\dagger\phi = \psi^\dagger\psi$, and $\psi^\dagger\psi$ and $\psi\psi^\dagger$ are isospectral. Thus, $\Pi(\phi)$ belongs to $\D(\rho)$.
Next, let $\sigma$ be an arbitrary density operator in $\D(\rho)$. Then $\sigma = U\rho U^\dagger$ for some unitary operator $U$ on $\HH$.
Now, $U\psi$ belongs to $\S(\psi)$, and $\Pi(U\psi)=\sigma$.
We conclude that $\Pi$ maps $\S(\psi)$ onto $\D(\rho)$.
\end{proof}
For each purification $\psi$ let $\U(\psi)$ be the right isotropy group of $\J(\psi)$,
\begin{equation}
	\U(\psi)
	=\{U\in\U(\HH):\Ad_U^*\J(\psi)=\J(\psi)\}
	=\{U\in\U(\HH): U\psi^\dagger\psi=\psi^\dagger\psi U\}.
\end{equation}
This group acts freely on $\S(\psi)$ from the right,
and properly discontinuously because it is compact.
Thus the coset space $\S(\psi)/\U(\psi)$ is a manifold, and the quotient map $\phi\mapsto [\phi]$ is a principal fiber bundle with right acting symmetry group $\U(\psi)$. 
\begin{proposition}\label{diffeomorphism}
If $\psi$ purifies $\rho$, then $[\phi]\mapsto \phi\phi^\dagger$ is a diffeomorphism
from  $\S(\psi)/\U(\psi)$ onto $\D(\rho)$.
\end{proposition}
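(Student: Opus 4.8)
The plan is to realize the map in question as the descent of the restriction $\Pi|_{\S(\psi)}$ along the principal bundle projection $q:\S(\psi)\to\S(\psi)/\U(\psi)$, and then to verify separately that this descent is a well-defined bijection and a local diffeomorphism. First I would check that the right $\U(\psi)$-action really preserves $\S(\psi)$ together with its $\Pi$-fibers: for $\phi\in\S(\psi)$ one has $\phi^\dagger\phi=\psi^\dagger\psi$, so if $U\in\U(\psi)$ commutes with $\psi^\dagger\psi$ then $(\phi U)^\dagger(\phi U)=U^\dagger\psi^\dagger\psi U=\psi^\dagger\psi$, whence $\phi U\in\S(\psi)$, while $(\phi U)(\phi U)^\dagger=\phi\phi^\dagger$. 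Hence $\phi\mapsto\phi\phi^\dagger$ is constant on $\U(\psi)$-orbits and descends to a map $\bar\Pi:\S(\psi)/\U(\psi)\to\D(\rho)$ with $\bar\Pi([\phi])=\phi\phi^\dagger$. Surjectivity of $\bar\Pi$ is immediate from Proposition \ref{onto}.

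The key algebraic step is injectivity. Suppose $\phi,\phi'\in\S(\psi)$ satisfy $\phi\phi^\dagger=\phi'\phi'^\dagger$, and set $V=\phi^{-1}\phi'$, which makes sense since $\phi$ is invertible. Then $\phi VV^\dagger\phi^\dagger=\phi'\phi'^\dagger=\phi\phi^\dagger$, and cancelling the invertible factors $\phi$ and $\phi^\dagger$ gives $VV^\dagger=\1$, so $V$ is unitary. Because both $\phi$ and $\phi'$ lie in $\S(\psi)$ we have $\phi^\dagger\phi=\phi'^\dagger\phi'=\psi^\dagger\psi$, and therefore $V^\dagger\psi^\dagger\psi V=(\phi V)^\dagger(\phi V)=\phi'^\dagger\phi'=\psi^\dagger\psi$; multiplying on the left by $V$ and using $VV^\dagger=\1$ shows $\psi^\dagger\psi V=V\psi^\dagger\psi$, i.e. $V\in\U(\psi)$. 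Since $\phi'=\phi V$ we conclude $[\phi]=[\phi']$, so $\bar\Pi$ is injective, and combined with the previous paragraph it is a bijection.

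It remains to upgrade this bijection to a diffeomorphism, which I expect to be the main obstacle, since everything so far is essentially linear algebra. Smoothness of $\bar\Pi$ is automatic: $q$ is a surjective submersion and $\Pi|_{\S(\psi)}$ is smooth and $\U(\psi)$-invariant, so it factors smoothly through $q$. For smoothness of the inverse I would argue that $\Pi|_{\S(\psi)}$ is itself a submersion, and then invoke the fact that if $\bar\Pi\circ q$ is a submersion and $q$ is a surjective submersion then $\bar\Pi$ is a submersion, since the image of $d\bar\Pi_{[\phi]}$ contains the image of $d(\Pi|_{\S(\psi)})_\phi=d\bar\Pi_{[\phi]}\circ dq_\phi$. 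A bijective submersion between smooth manifolds is a diffeomorphism, which then finishes the argument. To see that $\Pi|_{\S(\psi)}$ is a submersion I would pass to the homogeneous-space picture. By the transitivity of the left $\U(\HH)$-action on level sets of $\J$ established above, together with the fact that $L_U$ preserves $\J$ and acts freely (as $U\psi=\psi$ forces $U=\1$ by invertibility of $\psi$), $\S(\psi)$ is the free transitive left $\U(\HH)$-orbit of $\psi$, so $\S(\psi)\cong\U(\HH)$; likewise $\D(\rho)\cong\U(\HH)/\U_\rho$, where $\U_\rho=\{U\in\U(\HH):U\rho U^\dagger=\rho\}$ is the commutant of $\rho$. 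The commutativity of the left square in Section \ref{operator spaces} gives $\Pi(U\psi)=U\rho U^\dagger$, so under these identifications $\Pi|_{\S(\psi)}$ is exactly the canonical projection $\U(\HH)\to\U(\HH)/\U_\rho$, which is a submersion. This completes the plan.
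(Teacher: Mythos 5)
Your proof is correct and its core---the injectivity argument showing that two elements of $\S(\psi)$ with the same image under $\phi\mapsto\phi\phi^\dagger$ differ by a unitary commuting with $\psi^\dagger\psi$---is essentially the paper's own, the only difference being that you construct the unitary explicitly as $V=\phi^{-1}\phi'$ where the paper invokes the bundle structure of $\Pi$ to write $\chi=\phi U$. Your additional homogeneous-space argument identifying $\Pi|_{\S(\psi)}$ with the canonical submersion $\U(\HH)\to\U(\HH)/\U_\rho$ is also sound, and in fact supplies the smoothness of the inverse, a point the paper passes over with its opening ``Clearly''.
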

\begin{proof}
Clearly, the assignment $[\phi]\mapsto \phi\phi^\dagger$ is a well-defined and smooth map
from  $\S(\psi)/\U(\psi)$ onto $\D(\rho)$. To see that it is also injective assume that $\phi$ and $\chi$ in $\S(\psi)$
are such that $\chi\chi^\dagger=\phi\phi^\dagger$. Then $\chi=\phi U$ for a unitary operator on $\HH$. Now,
\begin{equation}
    U\psi^\dagger\psi=U\chi^\dagger\chi=UU^\dagger\phi^\dagger\phi U=\phi^\dagger\phi U=\psi^\dagger\psi U.
\end{equation}     
Thus $U$ belongs to $\U(\psi)$.
\end{proof}
	
According to Propositions \ref{onto} and \ref{diffeomorphism} we have a commutative diagram 
\begin{center}
	\begin{tikzcd}
	{} & \S(\psi) \arrow{dl}[swap]{\phi\mapsto [\phi]} \arrow{dr}{\Pi} & {} \\
	\S(\psi)/\U(\psi) \arrow{rr}{[\phi]\mapsto \phi\phi^\dagger} & {} & \D(\rho)
	\end{tikzcd}
\end{center}
in which the horizontal map is a diffeomorphism. Therefore, the restriction of $\Pi$ to $\S(\psi)$ is a principal fiber bundle with right acting symmetry group $\U(\psi)$.
Moreover, the isomorphism class of this bundle does not depend on the choice of $\psi$:
\begin{proposition}\label{relation}
If $\psi$ and $\phi$ purifies $\rho$, and $\phi=\psi U$, then $\U(\phi)=U^\dagger\U(\psi) U$, and $R_U$ restricts to an equivariant bundle map from $\S(\psi)$ onto $\S(\phi)$. In other words, the following diagram is commutative:
\begin{center}
	\begin{tikzcd}\label{diagrammet}
	\S(\psi) \arrow{dr}[swap]{\Pi} \arrow{rr}{R_U} & {} & \S(\phi)\arrow{dl}{\Pi} \\
	{} & \D(\rho) & {}
	\end{tikzcd}
\end{center}
\end{proposition}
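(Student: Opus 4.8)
The plan is to read off all three assertions directly from the defining property of the level sets, namely that $\J(\chi)=\J(\psi)$ holds precisely when $\chi^\dagger\chi=\psi^\dagger\psi$, a fact already used in the proof of Proposition \ref{onto}. Thus $\S(\psi)$ consists of exactly those purifications $\chi$ for which $\chi^\dagger\chi=\psi^\dagger\psi$, and since $\phi=\psi U$ one has $\phi^\dagger\phi=U^\dagger\psi^\dagger\psi\, U$. Every subsequent computation is an algebraic manipulation built on these two identities, so I expect no genuine analytic obstacle; the only point requiring care is keeping track of the two distinct (though conjugate) structure groups when formulating equivariance.

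First I would establish the isotropy relation. By definition $V\in\U(\phi)$ means $V\phi^\dagger\phi=\phi^\dagger\phi\, V$, and substituting $\phi^\dagger\phi=U^\dagger\psi^\dagger\psi\, U$ and conjugating by $U$ turns this into $(UVU^\dagger)\,\psi^\dagger\psi=\psi^\dagger\psi\,(UVU^\dagger)$, i.e.\ $UVU^\dagger\in\U(\psi)$. Hence $V\in\U(\phi)$ if and only if $V\in U^\dagger\U(\psi)U$, which is the claimed equality $\U(\phi)=U^\dagger\U(\psi)U$.

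Next I would check that $R_U$ restricts to a diffeomorphism $\S(\psi)\to\S(\phi)$. If $\chi\in\S(\psi)$, then $(\chi U)^\dagger(\chi U)=U^\dagger\chi^\dagger\chi\,U=U^\dagger\psi^\dagger\psi\,U=\phi^\dagger\phi$, so $R_U(\chi)\in\S(\phi)$; the same computation with $U^\dagger$ shows $R_{U^\dagger}$ sends $\S(\phi)$ back into $\S(\psi)$, so $R_U$ is a bijection between the two level sets with smooth inverse $R_{U^\dagger}$. Commutativity of the triangle is then the single line $\Pi(R_U(\chi))=(\chi U)(\chi U)^\dagger=\chi\chi^\dagger=\Pi(\chi)$, valid for every $\chi\in\S(\psi)$.

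Finally, equivariance. The point I would emphasize is that $R_U$ intertwines the $\U(\psi)$-action with the $\U(\phi)$-action through the isomorphism $W\mapsto U^\dagger W U$ supplied by the first part. Indeed, for $W\in\U(\psi)$,
\begin{equation}
	R_U(\chi W)=\chi W U=(\chi U)\,(U^\dagger W U)=R_U(\chi)\,(U^\dagger W U),
\end{equation}
and $U^\dagger W U\in U^\dagger\U(\psi)U=\U(\phi)$ by the isotropy relation. This exhibits $R_U$ as an equivariant principal-bundle isomorphism covering the identity on $\D(\rho)$, completing the argument. The main obstacle, such as it is, lies only in this bookkeeping: one must state equivariance relative to the conjugation isomorphism of structure groups rather than relative to a fixed group, since $\U(\psi)$ and $\U(\phi)$ need not coincide.
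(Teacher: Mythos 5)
Your proof is correct and follows essentially the same route as the paper: the paper declares the isotropy relation and the bundle-map property ``straightforward'' and writes out only the equivariance computation, which is identical to yours ($\chi W U = (\chi U)(U^\dagger W U)$). You have simply filled in the omitted algebra, all of which checks out.
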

\begin{proof}
The proof is straightforward, so we confine ourselves to showing that $R_U$ is equivariant.
Thus let $V$ be a unitary in $\U(\psi)$. Then,
\begin{equation}
	R_{U^\dagger V U}(R_U(\chi))=\chi UU^\dagger VU=\chi VU=R_U(R_V(\chi)).
\end{equation}
\end{proof}
\noindent Figure \ref{figure:redbun} illustrates the relation between the restrictions of $\Pi$ to $\S(\psi)$ and to $\S(\phi)$.
\begin{figure}
\centering
\includegraphics[width=0.60\textwidth]{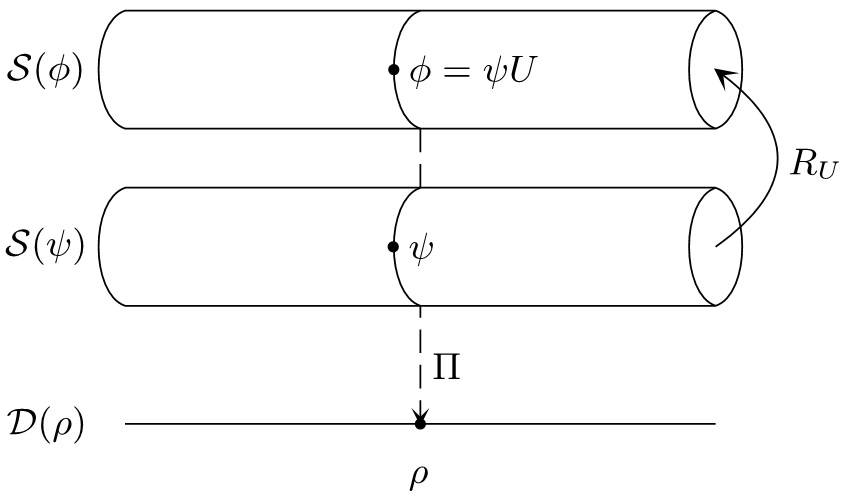}
\caption{Different purifications of the same density operator give rise to isomorphic reduced standard purification bundles.}
\label{figure:redbun}
\end{figure}
	
Next, we will equip the reduced standard purification bundles with connections which are such that if $U(t)$ parallel transports $\rho$ and $\psi(t)=U(t)\psi$, where $\psi$ purifies $\rho$, then $\psi(t)$ is horizontal. The geometric phase defined at \eqref{geometriphase} below thus generalizes the definition by Sj\"oqvist \emph{et al.}
Throughout the rest of this section we assume that $\rho$ is a given invertible density operator on $\HH$ and that $\psi$ is a purification of $\rho$.
	
\subsection{Geometric phase}
The real part of the Hilbert-Schmidt inner product, multiplied by $2\hbar$, is a Riemannian metric on $\L^\circ(\HH)$,
\begin{equation}\label{metric}
	G(X,Y)=\hbar\Tr(X^\dagger Y+ Y^\dagger X).
\end{equation}
This metric restricts to a metric on $\S(\psi)$ which is invariant under the action of the symmetry group $\U(\psi)$. We define the vertical and horizontal bundles over $\S(\psi)$ to be the 
subbundles $\V\S(\psi)=\Ker d(\Pi|_{\S(\psi)})$ and $\H\S(\psi)=\V\S(\psi)^\bot$
of the tangent bundle $\T\S(\psi)$. Here $d(\Pi|_{\S(\psi)})$ is the differential of the restriction of $\Pi$ to $\S(\psi)$, and $^\bot$ denotes the fiberwise orthogonal complement in $\T\S(\psi)$ with respect to $G$. The following proposition, illustrated in Figure \ref{levelsets}, is an infinitesimal version of Proposition \ref{skillnad}:
\begin{proposition}\label{infskillnad}
$\T_\psi\S(\psi)$ and Uhlmann's horizontal space $\H_\psi\S^\circ(\HH)$ have no common non-zero vector.
\end{proposition}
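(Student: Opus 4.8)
The plan is to obtain an explicit description of the tangent space $\T_\psi\S(\psi)$ and then to compare its defining equation with that of Uhlmann's horizontal space \eqref{horizontal}. Since $\S(\psi)$ is by definition the level set $\J^{-1}(\J(\psi))$, and Proposition \ref{smooth} tells us that $\psi$ is a regular point of $\J$, the space $\T_\psi\S(\psi)$ is precisely the kernel of the differential $d\J_\psi$.

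First I would compute that differential. Differentiating $\J(\psi)\xi = i\hbar\Tr(\psi^\dagger\psi\xi)$ in a direction $X$ gives
\begin{equation}
	d\J(X)\xi = i\hbar\Tr\!\bigl((X^\dagger\psi + \psi^\dagger X)\xi\bigr).
\end{equation}
The operator $A = X^\dagger\psi + \psi^\dagger X$ is Hermitian, so the condition $d\J(X) = 0$ amounts to $\Tr(A\xi) = 0$ for every skew-Hermitian $\xi$. Testing against the admissible choice $\xi = iA$ yields $\Tr(A^2) = 0$, forcing $A = 0$; this is the same Riesz-type nondegeneracy argument used in Proposition \ref{smooth}, now applied at the level of the differential. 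Hence
\begin{equation}
	\T_\psi\S(\psi) = \{X \in \T_\psi\S^\circ(\HH) : X^\dagger\psi + \psi^\dagger X = 0\}.
\end{equation}
(Equivalently, the identity $d\J(X)\xi = \Omega(\hat{\xi}(\psi), X)$ exhibits $\T_\psi\S(\psi)$ as the $\Omega$-orthogonal complement of the vertical space of $\Pi$, while Uhlmann's $\H_\psi\S^\circ(\HH)$ is its $G$-orthogonal complement.)

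With this in hand the conclusion is immediate. By \eqref{horizontal}, Uhlmann's horizontal space consists of those $X$ with $X^\dagger\psi - \psi^\dagger X = 0$. If $X$ lies in both $\T_\psi\S(\psi)$ and $\H_\psi\S^\circ(\HH)$, adding the two defining equations gives $X^\dagger\psi = 0$, and since $\psi$ is invertible this forces $X = 0$.

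I do not expect a genuine obstacle. The only step that requires any care is the passage from $d\J(X) = 0$ to the operator identity $X^\dagger\psi + \psi^\dagger X = 0$, that is, confirming that the trace pairing of a Hermitian operator against the skew-Hermitian directions is nondegenerate; everything after that is a one-line algebraic manipulation using the invertibility of $\psi$.
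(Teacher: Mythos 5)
Your proof is correct and follows essentially the same route as the paper: both identify $\T_\psi\S(\psi)$ with the set of $X$ satisfying $X^\dagger\psi+\psi^\dagger X=0$, compare with the defining equation $X^\dagger\psi-\psi^\dagger X=0$ of $\H_\psi\S^\circ(\HH)$, and conclude $X^\dagger\psi=0$ hence $X=0$ by invertibility of $\psi$. The only difference is that you explicitly justify the kernel characterization of $\T_\psi\S(\psi)$ (which the paper asserts here and only derives later, at \eqref{kernel}), and your nondegeneracy argument for that step is sound.
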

\begin{proof}
Every vector $X$ in $\T_\psi\S(\psi)$ satisfies $X^\dagger\psi+\psi^\dagger X=0$,
and every vector $X$ in $\H_\psi\S^\circ(\HH)$ satisfies $X^\dagger\psi-\psi^\dagger X=0$.
As $\psi$ is invertible, $X=0$ is the only common solution to these equations.
\end{proof}
\begin{figure}
	\centering
	\includegraphics[width=0.60\textwidth]{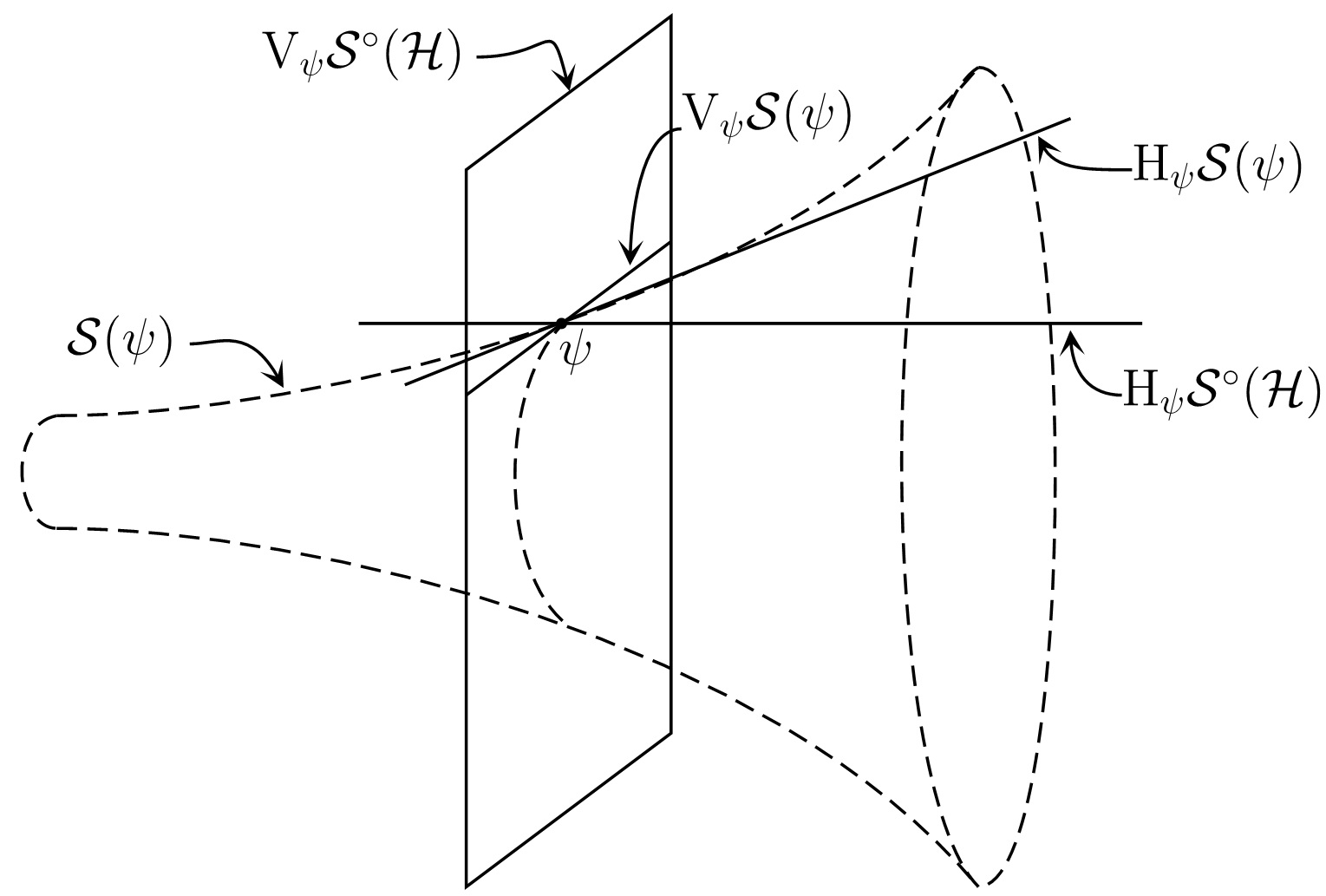}
	\caption{The tangent bundle of $\S(\psi)$ and Uhlmann's horizontal bundle have no common non-zero vectors.}
	\label{levelsets}
\end{figure}
Let $\rho(t)$ be a curve of density operators in $\D(\rho)$.
We define the geometric phase of $\rho(t)$ to be
\begin{equation}\label{geometriphase}
	\gamma_g[\rho(t)]=\arg\Tr\left(\psi(0)^\dagger\psi(\tau)\right),
\end{equation}
where $\psi(t)$ is any horizontal lift of $\rho(t)$ to $\S(\psi)$.
Clearly, the geometric phase is invariant under the action of $\U(\psi)$. Moreover, it does not depend on 
the choice of level set $\S(\psi)$ because every bundle map $R_U$ maps $\S(\psi)$ isometrically 
onto $\S(\psi U)$ with respect to $G|_{\S(\psi)}$ and $G|_{\S(\psi U)}$, and therefore preserves 
horizontality of curves. (See also Proposition \ref{preservation} below.) 
We will in the next paragraph show that for parallel transported density 
operators, \eqref{geometriphase} agrees with the definition of geometric phase 
\eqref{geophase} by Sj\"oqvist \emph{et al.}
	
\subsection{Construction of horizontal dynamics}
Let $\u(\psi)$ be the Lie algebra of the symmetry group $\U(\psi)$, that is, the Lie algebra of all skew-Hermitian operators on $\HH$ which commutes with $\psi^\dagger\psi$. The fundamental vector fields of the symmetry group action on $\S(\psi)$ yield canonical isomorphisms between $\u(\psi)$ and the fibers in $\V\S(\psi)$,
\begin{equation}
	\u(\psi)\ni\xi\mapsto\hat{\xi}(\phi)\in\V_\phi\S(\psi).
\end{equation}
Furthermore, $\H\S(\psi)$ is the kernel bundle of the mechanical connection form 
\begin{equation}
	\A:\T\S(\psi)\to\u(\psi),\quad \A_{\phi}=\II_{\phi}^{-1}\JJ_{\phi},
\end{equation} 
where 
\begin{alignat}{2}
	\II_\phi&:\u(\psi)\to \u(\psi)^*, &\quad  &\II_{\phi}(\xi)\eta=G(\hat\xi(\phi),\hat{\eta}(\phi)),\\
	\JJ_{\phi}&:\T_\phi\S(\psi)\to \u(\psi)^*,&\quad &\JJ_{\phi}(X)\eta=G(X,\hat\eta(\phi)).
\end{alignat}
If,  with respect to any ordering of the eigenvalues of $\psi^\dagger\psi$, we write 
$P_j$ for the orthogonal projection onto the $j^\text{th}$ eigenspace of $\psi^\dagger\psi$, the 
connection adopts the following form:
\begin{equation}\label{connectionform}
\A_\phi(X)= \sum_jP_j\phi^\dagger X P_j(\psi^\dagger\psi)^{-1}.
\end{equation}
To see this we first note that each $P_j\phi^\dagger X P_j(\psi^\dagger\psi)^{-1}$ belongs to $\u(\psi)$. For
\begin{equation}\label{eq}
	P_j\phi^\dagger X P_j(\psi^\dagger\psi)^{-1}=(\psi^\dagger\psi)^{-1}P_j\phi^\dagger X P_j
\end{equation}
implies that $P_j\phi^\dagger X P_j(\psi^\dagger\psi)^{-1}$ commutes with $\psi^\dagger\psi$, and $X^\dagger\phi+\phi^\dagger X=\0$ implies that $P_j\phi^\dagger X P_j(\psi^\dagger\psi)^{-1}$ 
is skew-Hermitian:
\begin{equation}
	\left(P_j\phi^\dagger X P_j(\psi^\dagger\psi)^{-1}\right)^\dagger+P_j\phi^\dagger X P_j(\psi^\dagger\psi)^{-1}=P_j(X^\dagger\phi+\phi^\dagger X)P_j(\psi^\dagger\psi)^{-1}=\0.
\end{equation}
The formula \eqref{connectionform} now follows from the calculation
\begin{equation}
	\begin{split}
		\II_{\phi}\Big(\sum_jP_j&\phi^\dagger X P_j(\psi^\dagger\psi)^{-1}\Big)\eta =\\
		&=\hbar\sum_j\Tr\big(P_jX^\dagger\phi P_j(\psi^\dagger\psi)^{-1}\phi^\dagger\phi\eta +\eta^\dagger\phi^\dagger\phi(\psi^\dagger\psi)^{-1} P_j \phi^\dagger X P_j\big)\\
		&=\hbar\sum_j\Tr\big(P_j(X^\dagger\phi\eta+\eta^\dagger\phi^\dagger X)P_j\big)\\
		&=\hbar\Tr\big(X^\dagger\phi\eta+\eta^\dagger\phi^\dagger X\big)\\
		&=\JJ_\phi(X)\eta,
	\end{split}
\end{equation}
where in the first identity we have used that $P_j\phi^\dagger X P_j$ commutes with $(\psi^\dagger\psi)^{-1}$, in the second identity that $\phi^\dagger\phi=\psi^\dagger\psi$, and in the third identity that $\eta P_j=P_j\eta$ for every $\eta$ in $\u(\psi)$.
As expected, the bundle map in Proposition \ref{relation} preserves the mechanical connection form: 
\begin{proposition}\label{preservation}
	$R_U^*\A_\phi=\Ad_{U^\dagger}\circ \A_\psi$.
\end{proposition}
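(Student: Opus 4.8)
\emph{Proof plan.} I would treat this as the naturality of the mechanical connection under the bundle isomorphism $R_U$ of Proposition \ref{relation}, and verify it either by substitution into \eqref{connectionform} or conceptually from the invariance of $G$. First I would pin down the two sides. Let $\A^{(\psi)}$ and $\A^{(\phi)}$ denote the mechanical connection forms of $\S(\psi)$ and $\S(\phi)$ (the objects written $\A_\psi$ and $\A_\phi$ in the statement); the latter is given by the analogue of \eqref{connectionform} with $\psi$ replaced by $\phi$ and is valued in $\u(\phi)$. Since $\phi=\psi U$ gives $\phi^\dagger\phi=U^\dagger(\psi^\dagger\psi)U$ and hence $\U(\phi)=U^\dagger\U(\psi)U$, the map $\Ad_{U^\dagger}$ restricts to a Lie-algebra isomorphism $\u(\psi)\to\u(\phi)$. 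Thus both $R_U^*\A^{(\phi)}$ and $\Ad_{U^\dagger}\circ\A^{(\psi)}$ are $\u(\phi)$-valued one-forms on $\S(\psi)$, and it suffices to show that they agree on an arbitrary $X\in\T_\chi\S(\psi)$.

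For the explicit route I would record two elementary facts. First, the eigenprojections $Q_j$ of $\phi^\dagger\phi=U^\dagger(\psi^\dagger\psi)U$ are $Q_j=U^\dagger P_j U$, with the same eigenvalue ordering, and $(\phi^\dagger\phi)^{-1}=U^\dagger(\psi^\dagger\psi)^{-1}U$. Second, $R_U$ is linear, so its differential sends $X\mapsto XU$. Substituting $R_U(\chi)=\chi U$ and $dR_U(X)=XU$ into the $\S(\phi)$-version of \eqref{connectionform} gives
\begin{equation}
(R_U^*\A^{(\phi)})_\chi(X)=\sum_j U^\dagger P_j U\,(\chi U)^\dagger (XU)\,U^\dagger P_j U\,U^\dagger(\psi^\dagger\psi)^{-1}U.
\end{equation}
Cancelling the internal factors $UU^\dagger$ telescopes this to $U^\dagger\big(\sum_j P_j\chi^\dagger X P_j(\psi^\dagger\psi)^{-1}\big)U$, which is exactly $\Ad_{U^\dagger}\big(\A^{(\psi)}_\chi(X)\big)$, as claimed.

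I would also note the conceptual route, since it explains why the identity must hold. Writing $\A=\II^{-1}\JJ$, the two inputs are the equivariance of fundamental vector fields, $dR_U(\hat\xi(\chi))=\widehat{\Ad_{U^\dagger}\xi}(\chi U)$ for $\xi\in\u(\psi)$, and the fact---already used when defining the geometric phase---that $R_U$ maps $\S(\psi)$ isometrically onto $\S(\phi)$ for $G$. Together these give $\II^{(\phi)}_{\chi U}(\Ad_{U^\dagger}\xi)=\Ad^*_U\,\II^{(\psi)}_\chi(\xi)$ and $\JJ^{(\phi)}_{\chi U}\circ dR_U=\Ad^*_U\,\JJ^{(\psi)}_\chi$; composing with the inverse metric operators then produces the stated transformation law.

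The computation is routine, so the only real care is bookkeeping. The main point to get right is that $\S(\psi)$ and $\S(\phi)$ are genuinely different bundles, with structure algebras $\u(\psi)$ and $\u(\phi)$, so that the two sides must be compared as $\u(\phi)$-valued forms through the isomorphism $\Ad_{U^\dagger}$; and one must check that the relabeling $Q_j=U^\dagger P_j U$ respects the eigenvalue ordering underlying \eqref{connectionform}. Verifying the identity at a general point $\chi$, rather than only at $\psi$, costs nothing extra but should be stated explicitly.
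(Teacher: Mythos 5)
Your proposal is correct and follows essentially the same route as the paper: the paper's proof is exactly the substitution into \eqref{connectionform} using the fact that $U^\dagger P_j U$ are the eigenprojections of $\phi^\dagger\phi$, with the internal $UU^\dagger$ factors cancelling to give $\Ad_{U^\dagger}\circ\A_\psi$. Your added care in checking the identity at a general point $\chi$ of $\S(\psi)$, and the conceptual remark via the isometry and equivariance of fundamental vector fields, are consistent refinements rather than a different argument.
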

\begin{proof}
If $P_j$ is the orthogonal projection onto the $j^\text{th}$ eigenspace of $\psi^\dagger\psi$,
then $U^\dagger P_j U$ is the orthogonal projection onto the $j^\text{th}$ eigenspace of $\phi^\dagger\phi$. 
Now
\begin{equation}
\begin{split}
	\A_{\phi}(dR_U(X))
	&=\sum_j(U^\dagger P_j U)\phi^\dagger (XU) (U^\dagger P_jU) (\phi^\dagger\phi)^{-1}\\
	&=U^\dagger\Big(\sum_jP_j \psi^\dagger X P_j(\psi^\dagger\psi)^{-1}\Big)U\\
	&=(\Ad_{U^\dagger}\circ \A_\psi)(X).
\end{split}
\end{equation}
\end{proof}
\noindent We use the following bijective correspondence between the eigenvectors of $\rho$ and $\psi^\dagger\psi$
to show that \eqref{geometriphase} generalizes definition \eqref{geophase}:
\begin{itemize}
\item If $\ket{\psi_k}$ is a normalized eigenvector of $\rho$ with eigenvalue $p_k$, then $\psi^\dagger\ket{\psi_k}/\sqrt{p_k}$ is a normalized eigenvector of $\psi^\dagger\psi$ with eigenvalue $p_k$.
\item If $\ket{k}$ is a normalized eigenvector of $\psi^\dagger\psi$ with eigenvalue $p_k$, then $\psi\ket{k}/\sqrt{p_k}$ is a normalized eigenvector of $\rho$ with eigenvalue $p_k$.
\end{itemize}
\begin{proposition}\label{horiz}
Let $\rho(t)$ be a curve in $\D(\rho)$ that emanates from $\rho$, and $\psi(t)$ be a lift of $\rho(t)$ to $\S(\psi)$ that emanates from $\psi$.
Then $\psi(t)$ is horizontal if and only if for every pair of eigenvectors $\ket{k}$ and $\ket{l}$ of $\psi^\dagger\psi$ with common eigenvalue, say $p$, the curves $\ket{\psi_k(t)}=\psi(t)\ket{k}/\sqrt{p}$ and $\ket{\psi_l(t)}=\psi(t)\ket{l}/\sqrt{p}$ satisfy
$\braket{\psi_k(t)}{\dot{\psi_l}(t)}=0$.
\end{proposition}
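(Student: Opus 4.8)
The plan is to decode the horizontality condition $\A_{\psi(t)}(\dot\psi(t))=0$ by means of the explicit formula \eqref{connectionform}, and then to match the resulting constraint, rewritten in terms of matrix elements, against the stated parallelism condition for the eigenstate curves. First I would record the structural fact that makes the computation transparent: since $\psi(t)$ lies in $\S(\psi)=\J^{-1}(\J(\psi))$, the operator $\psi(t)^\dagger\psi(t)$ equals $\psi^\dagger\psi$ for all $t$. Consequently the eigenprojections $P_j$ and eigenvalues $p_j$ of $\psi^\dagger\psi$ are time independent, and the connection form \eqref{connectionform} may be evaluated at $\phi=\psi(t)$ with these same projections.

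Next I would set $A(t)=\psi(t)^\dagger\dot\psi(t)$ and substitute $X=\dot\psi(t)$ into \eqref{connectionform}. Because $P_j$ projects onto the $p_j$-eigenspace of $\psi^\dagger\psi$ it commutes with $(\psi^\dagger\psi)^{-1}$, with $P_j(\psi^\dagger\psi)^{-1}=p_j^{-1}P_j$, so that
\begin{equation}
	\A_{\psi(t)}(\dot\psi(t))=\sum_j p_j^{-1}P_jA(t)P_j.
\end{equation}
The curve $\psi(t)$ is horizontal exactly when this vanishes for every $t$. Evaluating the sum on a vector in the $i^\text{th}$ eigenspace isolates the single term $p_i^{-1}P_iA(t)P_i$; since the eigenspaces span $\HH$ and the scalars $p_j^{-1}$ are nonzero, the sum vanishes if and only if $P_jA(t)P_j=0$ for every $j$. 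Thus horizontality at time $t$ is equivalent to the vanishing of all diagonal blocks $P_jA(t)P_j$.

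Finally I would translate these block conditions into inner products. Fixing $j$ and writing $p=p_j$, the identity $P_jA(t)P_j=0$ holds if and only if $\bra{k}A(t)\ket{l}=0$ for every pair of eigenvectors $\ket{k},\ket{l}$ of $\psi^\dagger\psi$ in the $j^\text{th}$ eigenspace, i.e.\ sharing the eigenvalue $p$. On the other hand, differentiating $\ket{\psi_l(t)}=\psi(t)\ket{l}/\sqrt{p}$, in which $p$ is constant, gives $\ket{\dot{\psi_l}(t)}=\dot\psi(t)\ket{l}/\sqrt{p}$, whence
\begin{equation}
	\braket{\psi_k(t)}{\dot{\psi_l}(t)}=\frac{1}{p}\bra{k}\psi(t)^\dagger\dot\psi(t)\ket{l}=\frac{1}{p}\bra{k}A(t)\ket{l}.
\end{equation}
Collecting these equivalences over all $j$ and all $t$ gives the assertion. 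The proof is a chain of equivalences with no deep obstacle; the only step requiring genuine care is the block decoupling of $\sum_j p_j^{-1}P_jA(t)P_j$, which relies on the orthogonality of the eigenspaces of $\psi^\dagger\psi$ and on their constancy along the curve, that is, on the level-set property noted at the outset.
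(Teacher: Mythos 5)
Your proposal is correct and follows essentially the same route as the paper: both decode horizontality via the explicit mechanical connection formula \eqref{connectionform}, use the level-set identity $\psi(t)^\dagger\psi(t)=\psi^\dagger\psi$ to keep the eigenprojections fixed, and match the diagonal blocks $P_j\psi(t)^\dagger\dot\psi(t)P_j$ against the inner products $\braket{\psi_k(t)}{\dot\psi_l(t)}$. The paper's version is just a terser rendering of the same computation, stated at the level of matrix elements rather than operator blocks.
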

\begin{proof}
Let $P_j$ be the orthogonal projection onto the $j^\text{th}$ eigenspace of $\psi^\dagger\psi$.
Then 
\begin{equation}
	p\braket{\psi_k(t)}{\dot{\psi_l}(t)}
	=\bra{k}\psi(t)^\dagger\dot{\psi}(t)\ket{l}
	= \sum_j\bra{k}P_j\psi(t)^\dagger\dot{\psi}(t)P_j\ket{l}.
\end{equation}
The assertion now follows from \eqref{connectionform} and the aforementioned bijective correspondence between the eigenvectors of $\rho$ and $\psi^\dagger\psi$.
\end{proof}
From Proposition \ref{horiz} we conclude if $U(t)$ parallel transports $\rho$, and $\psi(t)=U(t)\psi$, then $\psi(t)$ is a horizontal lift of $\rho(t)=U(t)\rho U(t)^\dagger$, and 
\begin{equation}
	\Tr(U(\tau)\rho)=\Tr(U(\tau)\psi(0)\psi(0)^\dagger)=\Tr(\psi(0)^\dagger\psi(\tau)).
\end{equation}
Thus the geometric phases \eqref{geophase} and \eqref{geometriphase} agree.

As the evolution operator of a quantum system is not always parallel transporting, it is desirable to have an expression for the geometric phase which only involves a purification of the initial state and the, possibly non-parallel transporting, evolution operator. The explicit formula \eqref{connectionform} makes it possible to derive such an expression. For if $\psi(t)$ is a curve in $\S(\psi)$, and
\begin{equation}
	\phi(t)=\psi(t)\ptexp\left(-\int_0^t\ds\,\A_{\psi(s)}(\dot{\psi}(s))\right),
\end{equation}
where $\ptexp$ is the positively time-ordered exponential, then $\phi(t)$ is a horizontal curve in $\S(\psi)$ 
that projects onto the same curve as $\psi(t)$.
Thus if $\rho(t)=U(t)\rho U(t)^\dagger$,
\begin{equation}
	\gamma_g[\rho(t)]
	=\arg\Tr\Bigg(\psi^\dagger U(\tau)\psi\ptexp\Big(-\int_0^\tau\ds\sum_jP_j\psi^\dagger U(s)^\dagger\dot{U}(s)\psi P_j(\psi^\dagger\psi)^{-1}\Big)\Bigg).
\end{equation}
\begin{example}[Geometric phase of a non-parallel transported mixed state]\label{easy}
Consider an ensemble of qubits represented by the density matrix $\rho=\diag(p_1,p_2)$, $p_1\ne p_2$, and assume that the qubits are affected by the Hamiltonian $\hat{H}=-\omega\mathbf{n}\cdot\boldsymbol\sigma$, where $\mathbf{n}=(\sin\theta,0,\cos\theta)^T$ and $\boldsymbol\sigma$ is the vector of Pauli matrices.
The evolution operator associated with $\hat{H}$ is 
\begin{equation*}
	U(t)=\cos(\omega t)\begin{pmatrix} 
	1 & 0\\ 
	0 & 1
	\end{pmatrix}+i\sin(\omega t)
	\begin{pmatrix} 
	\cos\theta & \sin\theta\\ 
	\sin\theta & -\cos\theta
	\end{pmatrix},
\end{equation*}
which is non-parallel transporting for $\rho$ if $\cos\theta\ne 0$, and a
lift of the ensemble's evolution curve is given by $\psi(t)=U(t)\psi$, where
$\psi=\diag(\sqrt{p_1},\sqrt{p_2})$. For this lift, 
\begin{equation}
	\ptexp\left(-\int_0^t\ds\,\A_{\psi(s)}\left(\dot\psi(s)\right)\right)
	=\begin{pmatrix} 
	e^{i\omega t\cos\theta} & 0\\ 
	0 & e^{-i\omega t\cos\theta}
	\end{pmatrix}.
\end{equation}
Thus the geometric phase of the evolution curve is
\begin{equation}
\begin{split}
	\gamma_g[\rho(t)]
	&=\arg\Tr\left(\psi^\dagger U(\tau)\psi\ptexp\left(-\int_0^t\ds\,\A_{\psi(s)}(\dot\psi(s))\right)\right)\\
	&=\arg\Tr \begin{pmatrix} 
	          p_1(\cos(\omega\tau)+i\sin(\omega\tau)\cos\theta)e^{i\omega\tau\cos\theta}\qquad\qquad *\qquad\qquad\\ 
	          \qquad\qquad*\qquad\qquad p_2(\cos(\omega\tau)-i\sin(\omega\tau)\cos\theta)e^{-i\omega\tau\cos\theta}
	          \end{pmatrix}.
\end{split}
\end{equation}
If, in particular, $\tau$ equals $\pi/\omega$,
the curve $\rho(t)$ is a loop, and
\begin{equation}
	\gamma_g[\rho(t)]
	=\arg\left(p_1 e^{i\pi(1+\cos\theta)}+p_2 e^{i\pi(1-\cos\theta)}\right).
\end{equation}
\end{example}
	
\section{Geometric phase for arbitrarily evolving quantum systems}\label{section:open}
In the previous section we saw that the standard purification bundle can be reduced to a bundle with a naturally defined connection giving rise to the geometric phase of Sj\"oqvist \emph{et al.}
In \cite{Tong_etal2005}, Tong \emph{et al.}~generalized the geometric phase of Sj\"oqvist \emph{et al.}~to systems for which the evolution is not necessarily governed by unitary operators.
We will in this section show that the standard purification bundle can be equipped with a natural connection which gives rise to the geometric phase of Tong \emph{et al.}
	
\subsection{Stratification of the standard purification bundle}
Let $\psi$ be an invertible operator on $\HH$.
The right coadjoint orbit of $\J(\psi)$ in $\u(\HH)^*$ is
\begin{equation}
	\O_{\J(\psi)}=\{\Ad^*_U\J(\psi): U\in\U(\HH)\}.
\end{equation}
This orbit is a submanifold of $\u(\HH)^*$, and $\J$ is transversal to it by Proposition \ref{smooth}. 
Thus the preimage $\J^{-1}(\O_{\J(\psi)})$ is a submanifold of $\L^\circ(\HH)$, which we for simplicity will denote by $\mathcal{Q}(\psi)$. 
If $\psi$ is a purification, then $\Q(\psi)$ is contained in $\S^\circ(\HH)$, and if $\psi$ purifies $\rho$, then $\Q(\psi)=\Pi^{-1}(\D(\rho))$.
In fact, $\Q(\psi)$ is foliated by those $\S(\phi)$ for which $\phi=U\psi V$  for some $U$ and $V$ in $\U(\HH)$, see Figure \ref{figure:stratification}.
\begin{figure}
	\centering
	\includegraphics[width=0.75\textwidth]{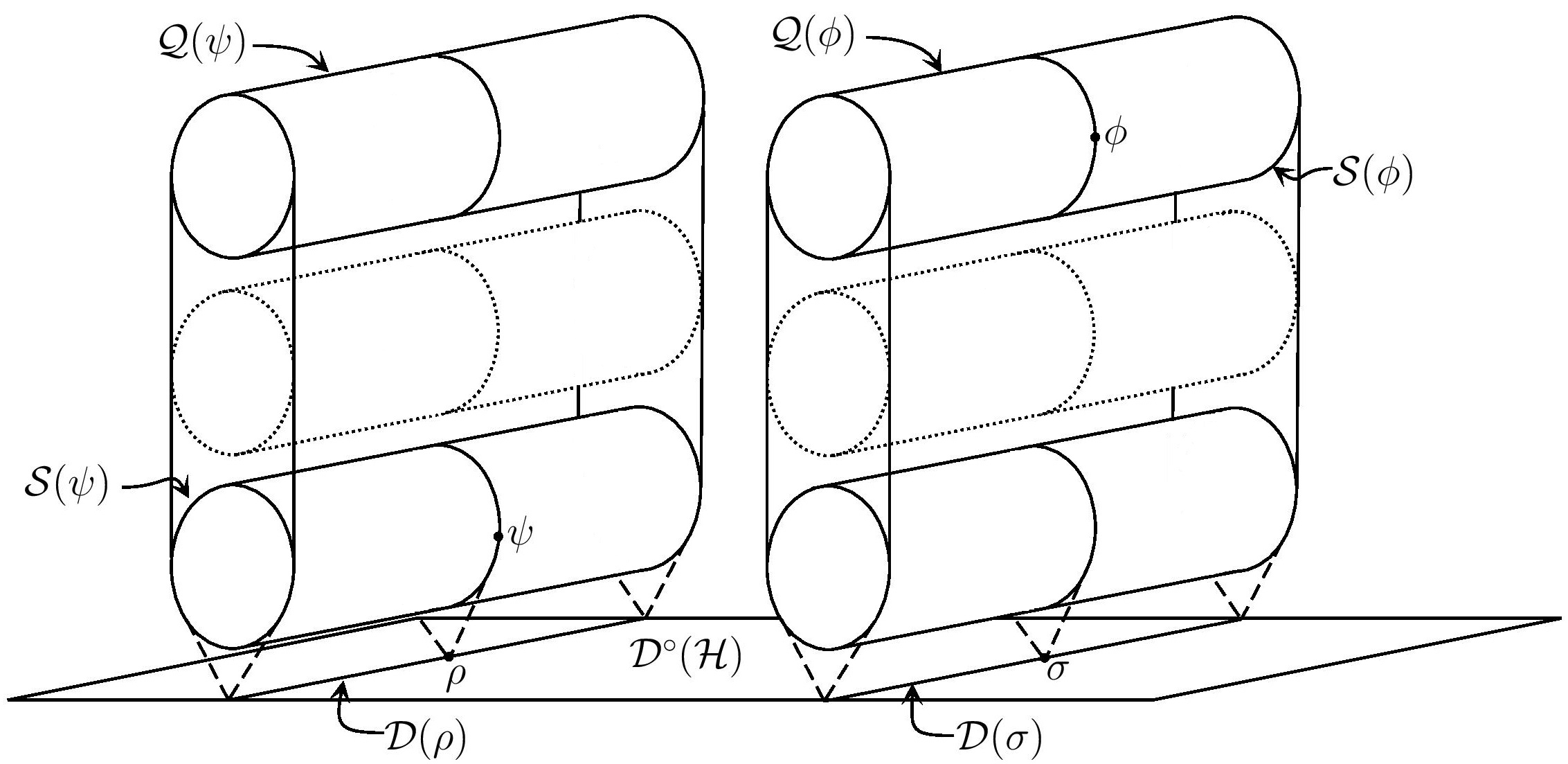}
	\caption{The stratification of $\S^\circ(\HH)$.}
	\label{figure:stratification}
\end{figure}
Next we show that the tangent space of $\Q(\psi)$ at $\psi$ is the sum of Uhlmann's vertical 
space at $\psi$ and the kernel of the differential of $\J$.

Recall that 
\begin{equation}
	\V_\psi\S^\circ(\HH)=\{X\in \T_\psi\S^\circ(\HH):  X\psi^\dagger+\psi X^\dagger =0\}=\{\hat{\xi}(\psi):\xi\in\u(\HH)\}.
\end{equation}
Moreover, for each $X$ in $\T_\psi\L^\circ(\HH)$ and $\xi$ in $\u(\HH)$,
\begin{equation}
	d\J(X)\xi=\Omega(\hat{\xi}(\psi),X)=i\hbar\Tr((X^\dagger\psi+\psi^\dagger X)\xi).
\end{equation}
Therefore, $X$ is in the kernel of the differential of $\J$ at $\psi$ if and only if $X^\dagger\psi+\psi^\dagger X =0$:
\begin{equation}
	\Ker d\J_\psi=\{X\in \T_\psi\L^\circ(\HH): X^\dagger\psi+\psi^\dagger X =0\}\label{kernel}.
\end{equation}
	
\begin{proposition}\label{dimensionofQ}
	$\T_\psi\Q(\psi)=\V_\psi\S^\circ(\HH)+\Ker d\J_\psi$.
\end{proposition}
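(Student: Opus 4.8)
The plan is to describe $\T_\psi\Q(\psi)$ through the transversal-preimage formula and then to identify the relevant tangent space using the equivariance of the momentum map $\J$. First I would record the two structural facts that drive everything. Since $\J$ is a submersion (Proposition \ref{smooth}) it is automatically transversal to the coadjoint orbit $\O_{\J(\psi)}$, so $\Q(\psi)=\J^{-1}(\O_{\J(\psi)})$ is cut out transversally and its tangent space is the full preimage of the orbit's tangent space,
\begin{equation*}
	\T_\psi\Q(\psi)=(d\J_\psi)^{-1}\big(\T_{\J(\psi)}\O_{\J(\psi)}\big).
\end{equation*}

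Next I would identify $\T_{\J(\psi)}\O_{\J(\psi)}$ with the image of Uhlmann's vertical space under $d\J_\psi$. Differentiating the curve $t\mapsto\Ad^*_{\exp(t\eta)}\J(\psi)=\J(\psi\exp(t\eta))$ at $t=0$ and using $\hat\eta(\psi)=\psi\eta$ yields the tangent vector $d\J_\psi(\hat\eta(\psi))$. As $\eta$ ranges over $\u(\HH)$ the left-hand vectors sweep out all of $\T_{\J(\psi)}\O_{\J(\psi)}$, while the right-hand vectors sweep out $d\J_\psi(\V_\psi\S^\circ(\HH))$ because $\V_\psi\S^\circ(\HH)=\{\hat\eta(\psi):\eta\in\u(\HH)\}$. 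Hence $\T_{\J(\psi)}\O_{\J(\psi)}=d\J_\psi(\V_\psi\S^\circ(\HH))$.

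Combining the two facts proves both inclusions simultaneously. A vector $X$ lies in $\T_\psi\Q(\psi)$ precisely when $d\J_\psi(X)\in d\J_\psi(\V_\psi\S^\circ(\HH))$, i.e.\ when $X$ differs from some vertical vector $\hat\eta(\psi)$ by an element of $\Ker d\J_\psi$; writing $X=\hat\eta(\psi)+(X-\hat\eta(\psi))$ then exhibits $X$ in $\V_\psi\S^\circ(\HH)+\Ker d\J_\psi$. Conversely, every element of $\V_\psi\S^\circ(\HH)+\Ker d\J_\psi$ has $\J$-differential in $\T_{\J(\psi)}\O_{\J(\psi)}$ and so lies in $\T_\psi\Q(\psi)$.

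I expect the only genuinely delicate point to be the first step, namely justifying the transversal-preimage tangent formula together with the dimension bookkeeping it encodes. If one prefers to sidestep the abstract transversality statement, the same conclusion follows by a direct dimension count: the inclusion $\V_\psi\S^\circ(\HH)+\Ker d\J_\psi\subseteq\T_\psi\Q(\psi)$ is immediate, since the level set $\S(\psi)$ and the right $\U(\HH)$-orbit through $\psi$ both lie in $\Q(\psi)$, and one then checks that the two sides share the dimension $\dim\Ker d\J_\psi+\dim\u(\HH)-\dim\u(\psi)$. This uses that $\xi\mapsto\psi\xi$ is injective, that $\V_\psi\S^\circ(\HH)\cap\Ker d\J_\psi=\{\psi\xi:[\psi^\dagger\psi,\xi]=0\}$ is isomorphic to $\u(\psi)$, and that the coadjoint orbit $\O_{\J(\psi)}$ has dimension $\dim\u(\HH)-\dim\u(\psi)$ by the orbit–stabilizer relation for the isotropy group $\U(\psi)$.
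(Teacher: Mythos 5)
Your proof is correct, but your primary route is genuinely different from the paper's. The paper argues by a pure dimension count: writing $q=\sum_i n_i^2$ for the dimension of the commutant $\u(\psi)$, it computes $\dim\T_\psi\Q(\psi)=2n^2-q$ from the codimension of the coadjoint orbit, notes that $\V_\psi\S^\circ(\HH)$ and $\Ker d\J_\psi$ each have dimension $n^2$ and intersect precisely in $\V_\psi\S(\psi)$, which has dimension $q$, and concludes that the two sides of the claimed identity have equal dimension (the inclusion $\V_\psi\S^\circ(\HH)+\Ker d\J_\psi\subseteq\T_\psi\Q(\psi)$ is left implicit). Your main argument instead identifies $\T_\psi\Q(\psi)$ exactly as $(d\J_\psi)^{-1}\big(\T_{\J(\psi)}\O_{\J(\psi)}\big)$ using transversality of the submersion $\J$, uses coadjoint equivariance to get $\T_{\J(\psi)}\O_{\J(\psi)}=d\J_\psi(\V_\psi\S^\circ(\HH))$, and then applies the elementary identity $(d\J_\psi)^{-1}(d\J_\psi(V))=V+\Ker d\J_\psi$. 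This is more conceptual: it requires no eigenvalue multiplicities, proves the equality directly rather than via inclusion plus dimension matching, and explicitly supplies the inclusion that the paper's proof tacitly assumes. What it does not give you is the explicit value $\dim\Q(\psi)=2n^2-q$, which the paper's proof establishes along the way and then reuses in the proof of Proposition \ref{dimension}; your fallback dimension count recovers exactly that computation and is essentially the paper's argument with the inclusion spelled out.
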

\begin{proof}
Let $n$ be the \emph{complex} dimension of $\HH$, $n_1,n_2,\dots,n_l$ be the multiplicities of the different eigenvalues of $\psi^\dagger\psi$, and set $q=n_1^2+n_2^2+\dots + n_l^2$. According to Proposition \ref{smooth}, the codimension of $\Q(\psi)$ in $\L^\circ(\HH)$ equals the codimension of $\O_{\J(\psi)}$ in $\u(\HH)^*$, namely $q$. Therefore, the dimension of $\Q(\psi)$ is $2n^2-q$.
	
The dimension of $\V_\psi\S^\circ(\HH)$ equals the dimension of the symmetry group $\U(\HH)$, which is $n^2$,
and the dimension of $\Ker d\J_\psi$ equals the difference of the dimensions of $\L^\circ(\HH)$ and $\u(\HH)$, which is also $n^2$.
Now, the assertion follows from the observation that the intersection of $\V_\psi\S^\circ(\HH)$ and $\Ker d\J_\psi$ equals $\V_\psi\S(\psi)$, which has dimension $q$. Indeed, by \eqref{kernel}, for each $\xi$ in $\u(\HH)$, the vector $\hat{\xi}(\psi)$ is in the kernel of the differential of $\J$ if and only if $\xi$ commutes with $\psi^\dagger\psi$.
\end{proof}
\subsection{A connection in the standard purification bundle}
The decomposition of $\S^\circ(\HH)$ into submanifolds $\Q(\psi)$, and the decomposition of each $\Q(\psi)$ into submanifolds $\S(\phi)$, suggests a new connection in the standard purification bundle: For each purification $\psi$ let $\K_\psi\S^\circ(\HH)$ be the space of all vectors in $\T_\psi\S^\circ(\HH)$ which are perpendicular to $\T_\psi\Q(\psi)$, and set 
\begin{equation}
\LL_\psi\S^\circ(\HH)= \H_\psi\S(\psi)\oplus \K_\psi\S^\circ(\HH),
\end{equation}
see Figure \ref{figure:newconnection}.
\begin{figure}
	\centering
	\includegraphics[width=0.65\textwidth]{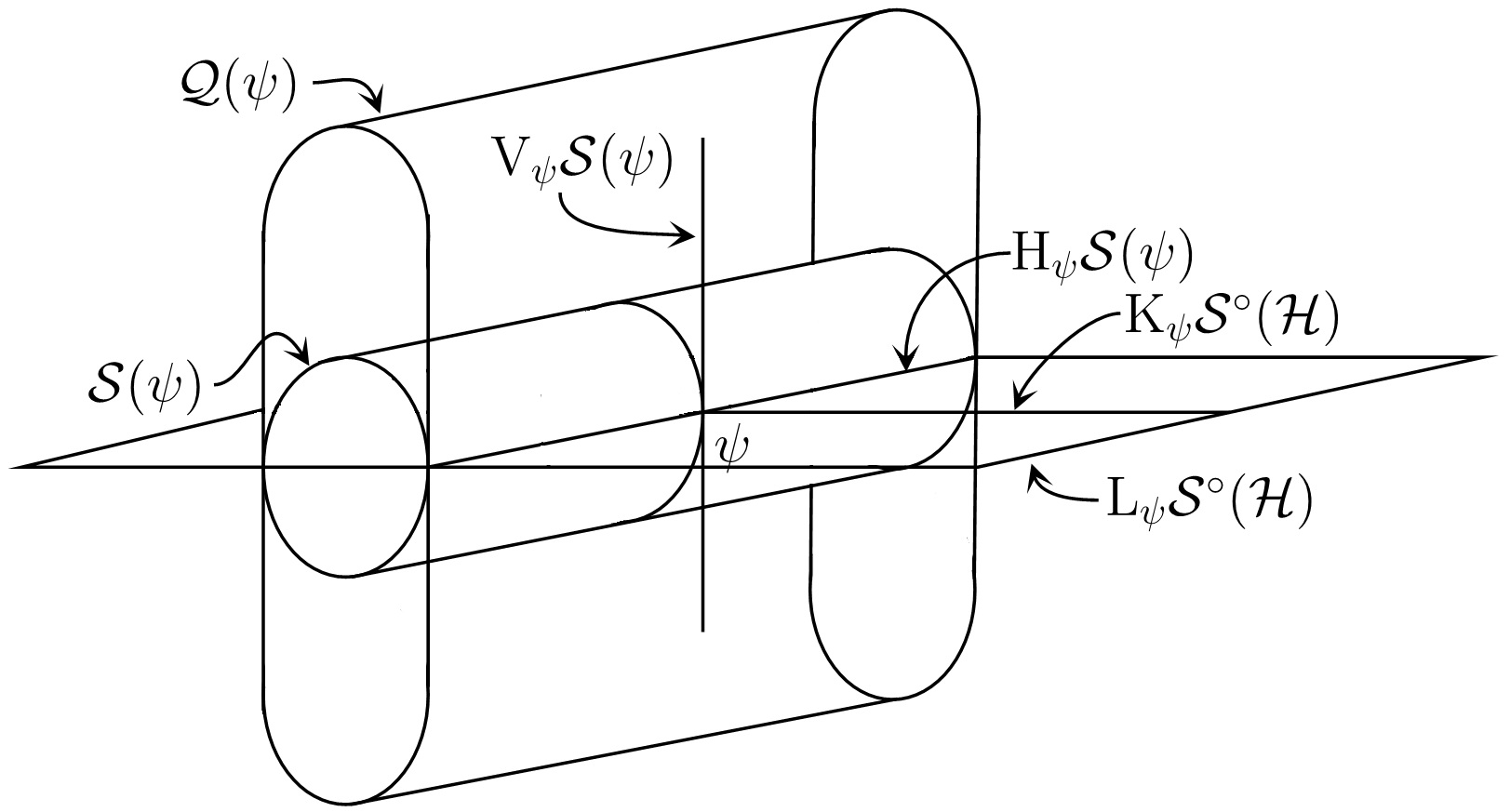}
	\caption{The horizontal space at $\psi$.}
	\label{figure:newconnection}
\end{figure}
Then $\LL\S^\circ(\HH)=\bigcup_{\psi\in\S^\circ(\HH)} \LL_\psi\S^\circ(\HH)$ is a connection in the standard purification bundle.
Before we prove this we observe that
\begin{itemize}
	\item $\K_\psi\S^\circ(\HH)$ is a subspace of $\H_\psi\S^\circ(\HH)$ because $\V_\psi\S^\circ(\HH)$ is a subspace of $\T_\psi\Q(\psi)$.
	\item The sum of $\H_\psi\S(\psi)$ and $\K_\psi\S^\circ(\HH)$ is indeed direct by Proposition \ref{infskillnad}.
	\item The spaces $\H_\psi\S(\psi)$ and $\V_\psi\S^\circ(\HH)$ only have the zero-vector in common because $\V_\psi\S(\psi)$ equals the intersection of $\T_\psi\S(\psi)$ and $\V_\psi\S^\circ(\HH)$.
\end{itemize}
\begin{proposition}\label{dimension}
	$\T_\psi\S^\circ(\HH)=\V_\psi\S^\circ(\HH)\oplus \LL_\psi\S^\circ(\HH)$.
\end{proposition}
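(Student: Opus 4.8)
The plan is to establish the claimed splitting by verifying the two conditions that characterize an internal direct sum: that $\V_\psi\S^\circ(\HH)\cap\LL_\psi\S^\circ(\HH)=\{0\}$, and that these two subspaces together span $\T_\psi\S^\circ(\HH)$. Both verifications lean on the structural facts already assembled, namely Proposition~\ref{dimensionofQ}, the three observations listed immediately before the statement, and the definition of $\K_\psi\S^\circ(\HH)$ as the $G$-orthogonal complement of $\T_\psi\Q(\psi)$ inside $\T_\psi\S^\circ(\HH)$.

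For the spanning I would proceed as follows. Since $\Q(\psi)\subseteq\S^\circ(\HH)$, the defining property of $\K_\psi\S^\circ(\HH)$ gives the orthogonal decomposition $\T_\psi\S^\circ(\HH)=\T_\psi\Q(\psi)\oplus\K_\psi\S^\circ(\HH)$, so it suffices to rewrite $\T_\psi\Q(\psi)$. Because $\S(\psi)$ is the level set $\J^{-1}(\J(\psi))$, its tangent space is exactly $\Ker d\J_\psi$, and this splits as $\T_\psi\S(\psi)=\V_\psi\S(\psi)\oplus\H_\psi\S(\psi)$. As $\V_\psi\S(\psi)=\T_\psi\S(\psi)\cap\V_\psi\S^\circ(\HH)$ is contained in $\V_\psi\S^\circ(\HH)$, Proposition~\ref{dimensionofQ} yields $\T_\psi\Q(\psi)=\V_\psi\S^\circ(\HH)+\Ker d\J_\psi=\V_\psi\S^\circ(\HH)+\H_\psi\S(\psi)$. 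Feeding this back gives $\T_\psi\S^\circ(\HH)=\V_\psi\S^\circ(\HH)+\H_\psi\S(\psi)+\K_\psi\S^\circ(\HH)=\V_\psi\S^\circ(\HH)+\LL_\psi\S^\circ(\HH)$, as desired.

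For the triviality of the intersection, take $X\in\V_\psi\S^\circ(\HH)\cap\LL_\psi\S^\circ(\HH)$ and write $X=Y+Z$ with $Y\in\H_\psi\S(\psi)$ and $Z\in\K_\psi\S^\circ(\HH)$ (the sum is direct, so this is legitimate). The crux is that $X$ lies in $\T_\psi\Q(\psi)$ by the first listed observation, and so does $Y$, since $\H_\psi\S(\psi)\subseteq\T_\psi\S(\psi)\subseteq\T_\psi\Q(\psi)$; hence $Z=X-Y\in\T_\psi\Q(\psi)$. But $Z$ is also orthogonal to $\T_\psi\Q(\psi)$ by the definition of $\K_\psi\S^\circ(\HH)$, forcing $Z=0$. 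Thus $X=Y$ belongs to $\V_\psi\S^\circ(\HH)\cap\H_\psi\S(\psi)$, which is zero by the third listed observation, so $X=0$.

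The main thing to get right is the bookkeeping of which spaces sit inside $\T_\psi\Q(\psi)$ — specifically the identifications $\Ker d\J_\psi=\T_\psi\S(\psi)$ and $\V_\psi\S(\psi)=\T_\psi\S(\psi)\cap\V_\psi\S^\circ(\HH)$, which are precisely what convert Proposition~\ref{dimensionofQ} into the clean decomposition $\T_\psi\Q(\psi)=\V_\psi\S^\circ(\HH)+\H_\psi\S(\psi)$. As an alternative to the spanning argument one could instead count real dimensions: Proposition~\ref{dimensionofQ} gives $\dim\K_\psi\S^\circ(\HH)=q-1$ and $\dim\H_\psi\S(\psi)=n^2-q$, so $\dim\LL_\psi\S^\circ(\HH)=n^2-1$, which together with $\dim\V_\psi\S^\circ(\HH)=n^2$ sums to $\dim\T_\psi\S^\circ(\HH)=2n^2-1$; combined with the trivial intersection this again yields the direct sum.
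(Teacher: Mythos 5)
Your proof is correct, and it takes a different route from the paper's. The paper proves Proposition~\ref{dimension} by pure dimension counting: with $n$ and $q$ as in the proof of Proposition~\ref{dimensionofQ}, it computes $\dim\V_\psi\S^\circ(\HH)=n^2$, $\dim\H_\psi\S(\psi)=n^2-q$, and $\dim\K_\psi\S^\circ(\HH)=q-1$, observes that these add up to $2n^2-1=\dim\T_\psi\S^\circ(\HH)$, and leaves the directness of the triple sum to the three bulleted observations preceding the statement. You instead argue structurally: you obtain spanning from the $G$-orthogonal splitting $\T_\psi\S^\circ(\HH)=\T_\psi\Q(\psi)\oplus\K_\psi\S^\circ(\HH)$ together with the rewriting $\T_\psi\Q(\psi)=\V_\psi\S^\circ(\HH)+\Ker d\J_\psi=\V_\psi\S^\circ(\HH)+\H_\psi\S(\psi)$ (valid since $\Ker d\J_\psi=\T_\psi\S(\psi)$ and $\V_\psi\S(\psi)\subseteq\V_\psi\S^\circ(\HH)$), and you get $\V_\psi\S^\circ(\HH)\cap\LL_\psi\S^\circ(\HH)=\{0\}$ by decomposing an element of the intersection and using orthogonality to $\T_\psi\Q(\psi)$ to kill its $\K$-component. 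Your version is in fact slightly more airtight: pairwise trivial intersections plus a matching dimension count do not by themselves force a direct sum of three subspaces (one also needs spanning, which you prove explicitly and the paper obtains only implicitly), whereas the paper's count is shorter and reuses the bookkeeping of Proposition~\ref{dimensionofQ}. Your closing remark correctly identifies the paper's dimension count as the alternative argument.
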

\begin{proof}
Let $n$ and $q$ be as in the proof of Proposition \ref{dimensionofQ}.
Then the dimension of $\T_\psi\S^\circ(\HH)$ is $2n^2-1$, the dimension of $\V_\psi\S^\circ(\HH)$ is $n^2$, and 
the dimension of $\H_\psi\S(\psi)$ is the same as the dimension of $\D(\psi\psi^\dagger)$, namely $n^2-q$.
Moreover, the dimension of $\K_\psi\S^\circ(\HH)$ equals the codimension of $\Q(\psi)$ in $\S^\circ(\HH)$,  which is $q-1$. Thus
\begin{equation}
\begin{split}
	\dim\V_\psi\S^\circ(\HH)\oplus \H_\psi\S(\psi)\oplus \K_\psi\S^\circ(\HH)
	&=n^2+(n^2-q)+(q - 1)\\
	&=2n^2-1\\
	&=\dim\T_\psi\S^\circ(\HH).
\end{split}
\end{equation}
\end{proof}
\begin{proposition}\label{gaugeinvariant}
$\LL\S^\circ(\HH)$ is invariant under the right action by $\U(\HH)$.
\end{proposition}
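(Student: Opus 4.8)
The plan is to verify the defining condition for gauge invariance directly, namely that $dR_U\bigl(\LL_\psi\S^\circ(\HH)\bigr)=\LL_{\psi U}\S^\circ(\HH)$ for every purification $\psi$ and every $U$ in $\U(\HH)$. Since $R_U$ is the restriction of the linear map $\psi\mapsto\psi U$, its differential is simply $dR_U(X)=XU$, and the first fact I would record is that $R_U$ is an isometry of the metric $G$: a one-line trace computation gives $G(XU,YU)=\hbar\Tr(U^\dagger X^\dagger YU+U^\dagger Y^\dagger XU)=G(X,Y)$ because $U$ is unitary. As $R_U$ also preserves $\S^\circ(\HH)$, it restricts to a linear $G$-isometry from $\T_\psi\S^\circ(\HH)$ onto $\T_{\psi U}\S^\circ(\HH)$. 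Because $\LL_\psi\S^\circ(\HH)$ is the direct sum $\H_\psi\S(\psi)\oplus\K_\psi\S^\circ(\HH)$, it then suffices to track each summand under $dR_U$ separately.

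For the first summand I would invoke Proposition \ref{relation}, which says that $R_U$ restricts to an equivariant bundle map from $\S(\psi)$ onto $\S(\psi U)$ covering the identity on $\D(\psi\psi^\dagger)$. Being a bundle map over the base, it carries the vertical space $\V_\psi\S(\psi)=\Ker d(\Pi|_{\S(\psi)})$ onto $\V_{\psi U}\S(\psi U)$; being in addition a $G$-isometry, it carries orthogonal complements to orthogonal complements, so $dR_U\bigl(\H_\psi\S(\psi)\bigr)=\H_{\psi U}\S(\psi U)$.

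For the second summand the key point is that $R_U$ preserves $\Q(\psi)$. Here I would use the coadjoint equivariance of the momentum map together with the computation $\J(\psi U)=\Ad^*_U\J(\psi)$, which shows that $\J(\psi U)$ lies on the same coadjoint orbit as $\J(\psi)$; since that orbit is invariant under the coadjoint action, any $\chi$ with $\J(\chi)\in\O_{\J(\psi)}$ satisfies $\J(\chi U)=\Ad^*_U\J(\chi)\in\O_{\J(\psi)}$, whence $R_U$ maps $\Q(\psi)=\J^{-1}(\O_{\J(\psi)})$ onto itself and in particular $\Q(\psi U)=\Q(\psi)$. Consequently $dR_U$ maps $\T_\psi\Q(\psi)$ onto $\T_{\psi U}\Q(\psi U)$. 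Since $\K_\psi\S^\circ(\HH)$ is by definition the $G$-orthogonal complement of $\T_\psi\Q(\psi)$ inside $\T_\psi\S^\circ(\HH)$, and $R_U$ is a $G$-isometry preserving $\S^\circ(\HH)$, it sends this complement onto the corresponding complement at $\psi U$, that is, onto $\K_{\psi U}\S^\circ(\HH)$.

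Combining the two summands gives $dR_U(\LL_\psi\S^\circ(\HH))=\H_{\psi U}\S(\psi U)\oplus\K_{\psi U}\S^\circ(\HH)=\LL_{\psi U}\S^\circ(\HH)$, which is the assertion. I expect the only delicate point to be keeping straight where each orthogonal complement is taken — inside $\T\S^\circ(\HH)$ for $\K$, and inside $\T\S(\psi)$ for $\H\S(\psi)$ — but in both cases the argument reduces to the principle that a $G$-isometry preserves orthogonal complements, once the isometry property of $R_U$ and the two geometric facts ($R_U$ preserves each $\Q(\psi)$ and maps each $\S(\psi)$ to $\S(\psi U)$ as a bundle map) are established.
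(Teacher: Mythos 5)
Your proof is correct, and it follows the paper's overall strategy of handling the two summands of $\LL_\psi\S^\circ(\HH)=\H_\psi\S(\psi)\oplus\K_\psi\S^\circ(\HH)$ separately, but you justify the first summand differently. The paper disposes of $\H_\psi\S(\psi)$ by citing Proposition \ref{preservation}, i.e.\ the transformation law $R_U^*\A=\Ad_{U^\dagger}\circ\A$ for the mechanical connection form, whose kernel bundle is $\H\S(\psi)$; you instead combine the bundle-map statement of Proposition \ref{relation} (which gives $dR_U(\V_\psi\S(\psi))=\V_{\psi U}\S(\psi U)$, since $\Pi\circ R_U=\Pi$ on $\S(\psi)$) with the fact that $R_U$ is a $G$-isometry carrying $\T_\psi\S(\psi)$ onto $\T_{\psi U}\S(\psi U)$, so that orthogonal complements go to orthogonal complements. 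Both arguments are valid; yours is slightly more elementary in that it never touches the explicit formula \eqref{connectionform}, while the paper's is shorter given that Proposition \ref{preservation} has already been proved. For the second summand your argument coincides with the paper's (isometry plus preservation of $\Q(\psi)$ and $\S^\circ(\HH)$), except that you supply the detail the paper leaves implicit, namely \emph{why} $R_U$ preserves $\Q(\psi)$: the coadjoint equivariance $\J(\chi U)=\Ad^*_U\J(\chi)$ shows that $\J^{-1}(\O_{\J(\psi)})$ is $R_U$-invariant and indeed that $\Q(\psi U)=\Q(\psi)$. Your closing remark about where each orthogonal complement lives is exactly the right point to be careful about, and you have handled it correctly.
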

\begin{proof}
Let $U$ be a unitary operator on $\HH$. Then 
$dR_U(\H_\psi\S(\psi))=\H_{\psi U}\S(\psi U)$
according to Proposition \ref{preservation}.
Moreover, 
$dR_U(\K_\psi\S^\circ(\HH))=\K_{\psi U}\S^\circ(\HH)$ 
because $R_U$ is an isometry on $\L^\circ(\HH)$ which preserves both $\Q(\psi)$ and $\S^\circ(\HH)$.
\end{proof}
\subsection{Geometric phase for quantum ensembles}
Let $\rho(t)$ be a curve in $\D^\circ(\HH)$.
We define the geometric phase of $\rho(t)$ to be 
\begin{equation}\label{gp}
	\gamma_g[\rho(t)]=\arg\Tr\left(\psi(0)^\dagger\psi(\tau)\right),
\end{equation}
where $\psi(t)$ is any horizontal lift of $\rho(t)$ to $\S^\circ(\HH)$.
Here, and henceforth, by horizontal we mean that $\dot{\psi}(t)$ belongs to $\LL_{\psi(t)}\S^\circ(\HH)$ for every $0\leq t\leq \tau$. 
	
The definition \eqref{gp} is manifestly invariant under the action of the symmetry group. Moreover, it reduces to definition \eqref{geometriphase} if $\rho(t)$ is contained in the orbit $\D(\rho)$, where $\rho=\rho(0)$.
For then $\psi(t)$ is contained in $\S(\psi)$, where $\psi=\psi(0)$, and $\dot{\psi}(t)$
sits in $\H_{\psi(t)}\S(\psi)$. Next we show that for an arbitrary $\rho(t)$, the definitions \eqref{gp} and \eqref{opengp} agree. 
To this end, let 
\begin{equation}\label{r}
	\rho(t)=\sum_k p_k(t)\ketbra{\psi_k(t)}{\psi_k(t)}
\end{equation} 
be a smoothly varying spectral decomposition of $\rho(t)$. Furthermore, let $\{\ket{k}\}$ be an orthonormal basis for $\HH$, and define 
\begin{equation}\label{p}
	\psi(t)=\sum_k\sqrt{p_k(t)}\ketbra{\psi_k(t)}{k}.
\end{equation} 
Then $\psi(t)$ is a lift of $\rho(t)$ to $\S^\circ(\HH)$, see Figure \ref{figure:velocitydecomp}.
\begin{figure}
	\centering
	\includegraphics[width=0.75\textwidth]{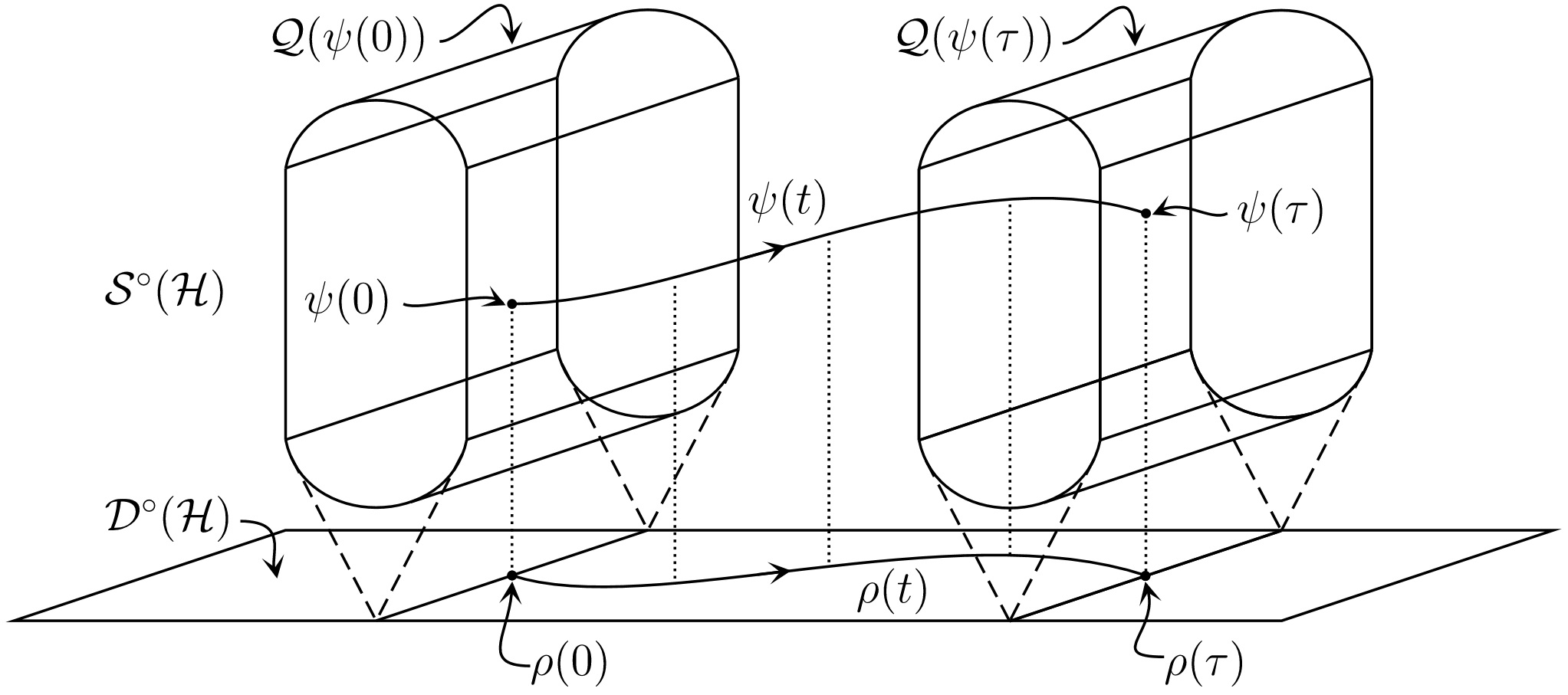}
	\caption{An evolution connecting non-isospectral mixed states.}
	\label{figure:velocitydecomp}
\end{figure}
\begin{theorem}\label{main}
The lift $\psi(t)$ is horizontal if and only if the curves $\ket{\psi_k(t)}$ satisfy the parallelism condition \eqref{pc}.
\end{theorem}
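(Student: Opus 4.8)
The plan is to collapse the whole statement into one matrix computation by putting the lift in polar form and flattening the metric $G$. Fix a time $t$ and write $\psi = WS$, where $W=\sum_k\ketbra{\psi_k(t)}{k}$ is unitary and $S=\diag(\sqrt{p_k(t)})$ in the fixed basis $\{\ket{k}\}$; by \eqref{p} one has $\psi^\dagger\psi=S^2$ and $\psi\psi^\dagger=\rho(t)$, so $\{\ket{k}\}$ is an eigenbasis of $\psi^\dagger\psi$ and $S$ is block-scalar on each eigenspace. Substituting $X=WY$ for tangent vectors at $\psi$ turns $G$ into the flat real pairing $G(WY,WY')=2\hbar\,\Re\Tr(Y^\dagger Y')$, and the point of the substitution is that each ingredient of the splitting $\T_\psi\S^\circ(\HH)=\V_\psi\S^\circ(\HH)\oplus\LL_\psi\S^\circ(\HH)$ becomes a transparent condition on the entries of $Y$ relative to the eigenblocks of $S$.

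First I would rewrite the three building blocks of the connection in the variable $Y$. A direct manipulation of their defining equations gives: $\V_\psi\S^\circ(\HH)$ is the set of $Y=S\xi$ with $\xi$ skew-Hermitian; the level-set tangent $\T_\psi\S(\psi)$ is the set of $Y$ with $SY$ skew-Hermitian; and, after taking $G$-orthogonal complements, $\H_\psi\S(\psi)$ is the set of $Y$ supported off the eigenblocks of $S$ with $SY$ skew-Hermitian, while $\K_\psi\S^\circ(\HH)$, being perpendicular to $\T_\psi\Q(\psi)=\V_\psi\S^\circ(\HH)+\Ker d\J_\psi$, is the set of Hermitian $Y$ that commute with $S$ and satisfy $\Re\Tr(SY)=0$. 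The two blocks of $\LL_\psi\S^\circ(\HH)=\H_\psi\S(\psi)\oplus\K_\psi\S^\circ(\HH)$ therefore occupy disjoint matrix positions --- off-block for $\H_\psi\S(\psi)$, block-diagonal for $\K_\psi\S^\circ(\HH)$ --- so membership in $\LL_\psi\S^\circ(\HH)$ can be tested one entry at a time. Checking these entrywise descriptions, with the dimension counts of Propositions \ref{dimensionofQ} and \ref{dimension} as a consistency check, is the bookkeeping core of the argument.

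Finally I would differentiate the lift. Setting $\Xi=W^\dagger\dot W$, which is skew-Hermitian, gives $Y=W^\dagger\dot\psi=\Xi S+\dot S$, and the decisive observation is that $\Xi_{kl}=\braket{\psi_k(t)}{\dot\psi_l(t)}$, so the parallelism condition \eqref{pc} says exactly that every block-diagonal entry of $\Xi$ vanishes. Feeding $Y=\Xi S+\dot S$ into the entrywise form of $\LL_\psi\S^\circ(\HH)$ then closes the argument in both directions: the off-block part of $Y$ satisfies the skew-Hermiticity condition automatically because $\Xi$ is skew-Hermitian, and the constraint $\Re\Tr(SY)=0$ holds automatically because $\sum_k p_k(t)=\Tr\rho(t)=1$ is constant, so the only genuine requirement is that the block-diagonal part of $Y$ be Hermitian, which forces precisely $\Xi_{kl}=0$ whenever $p_k=p_l$. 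I expect the real obstacle to be conceptual rather than computational: the splitting $\T_\psi\S^\circ(\HH)=\V_\psi\S^\circ(\HH)\oplus\LL_\psi\S^\circ(\HH)$ is not $G$-orthogonal, so horizontality cannot be tested by an orthogonal projection onto $\V_\psi\S^\circ(\HH)$, and one is forced to secure the explicit entrywise description of $\LL_\psi\S^\circ(\HH)$ above; once that is in hand, separating the two automatically satisfied conditions from the Hermiticity condition is what makes the parallelism condition drop out as the exact criterion.
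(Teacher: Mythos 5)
Your argument is correct, and at its core it rests on the same decomposition of the velocity as the paper: in your trivialization $\psi=WS$ the two pieces $W\Xi S$ and $W\dot S$ of $\dot\psi$ are precisely the paper's $\pa_u\Psi(t,t)$ and $\pa_v\Psi(t,t)$ coming from the two-parameter family $\Psi(u,v)=\sum_k\sqrt{p_k(v)}\ketbra{\psi_k(u)}{k}$. Where you genuinely differ is in how membership in $\LL_\psi\S^\circ(\HH)$ is tested. The paper stays intrinsic: it notes that $\pa_u\Psi$ is tangent to the leaf $\S(\psi(t))$ (a one-line geometric observation), shows $\pa_v\Psi\in\K_{\psi(t)}\S^\circ(\HH)$ by pairing it directly against $\V\S^\circ(\HH)$ and $\Ker d\J$, and then converts the residual condition $\A_{\psi(t)}(\pa_u\Psi)=0$ into \eqref{pc} via the already-established formula \eqref{connectionform} for the mechanical connection. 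You instead derive complete entrywise characterizations of $\V_\psi\S^\circ(\HH)$, $\T_\psi\S(\psi)$, $\H_\psi\S(\psi)$ and $\K_\psi\S^\circ(\HH)$ in the flattened variable $Y=W^\dagger X$, and exploit the fact that $\H_\psi\S(\psi)$ and $\K_\psi\S^\circ(\HH)$ occupy complementary matrix positions relative to the eigenblocks of $S$. I checked your four descriptions and they are all right, as are the two claims of automatic vanishing (the off-block condition from skew-Hermiticity of $\Xi$, the trace condition from $\sum_kp_k(t)=1$). Your route is more computational but buys more: it produces an explicit description of the new horizontal space $\LL_\psi\S^\circ(\HH)$ itself, from which both directions of the theorem drop out by inspection, whereas the paper's route avoids choosing a trivialization and reuses the connection form built for the unitary case. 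Your closing remark is also accurate and worth keeping: because $\LL_\psi\S^\circ(\HH)$ is not the $G$-orthogonal complement of $\V_\psi\S^\circ(\HH)$, horizontality cannot be checked by projecting away the vertical part, so either your characterization of $\LL$ or the paper's explicit membership proofs for the two summands is genuinely needed.
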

	
Define a $2$-parameter family of purifications by
\begin{equation}
	\Psi(u,v)=\sum_k\sqrt{p_k(v)}\ketbra{\psi_k(u)}{k}.
\end{equation}
Then $\Psi(t,t)=\psi(t)$, and the velocity field of $\psi(t)$ is 
\begin{equation}
	\dot{\psi}(t)=\pa_u\Psi(t,t)+\pa_v\Psi(t,t),
\end{equation} 
where 
\begin{align}
	\pa_u\Psi(t,t)&=\sum_k\sqrt{p_k(t)}\ketbra{\dot{\psi_k}(t)}{k},\label{u}\\
	\pa_v\Psi(t,t)&=\sum_k\dd{t}\left(\!\sqrt{p_k(t)}\right)\ketbra{\psi_k(t)}{k}.\label{v} 
\end{align}
We show that $\pa_u\Psi(t,t)$ belongs to $\T_{\psi(t)}\S(\psi(t))$ and $\pa_v\Psi(t,t)$ belongs to $\K_{\psi(t)}\S^\circ(\HH)$.
Theorem \ref{main} then follows from the observation 
\begin{equation}
\sqrt{p_k(t)p_l(t)}\braket{\psi_k(t)}{\dot{\psi}_l(t)}=\bra{k}\psi(t)^\dagger\pa_u\Psi(t,t)\ket{l}.
\end{equation}
For if $p_k(t)=p_l(t)$, then 
\begin{equation}
\begin{split}
\bra{k}\psi(t)^\dagger\pa_u\Psi(t,t)\ket{l}
&=p_l(t)\bra{k}\sum_jP_j(t)\psi(t)^\dagger\pa_u\Psi(t,t)P_j(t)(\psi(t)^\dagger\psi(t))^{-1}\ket{l}\\
&=p_l(t)\bra{k}\A_{\psi(t)}(\pa_u\Psi(t,t))\ket{l}.
\end{split}
\end{equation}
Here, $P_j(t)$ is orthogonal projection onto the $j^\text{th}$ eigenspace of $\psi(t)^\dagger\psi(t)$.
\begin{proposition}\label{T}
$\pa_u\Psi(t,t)$ belongs to $\T_{\psi(t)}\S(\psi(t))$.
\end{proposition}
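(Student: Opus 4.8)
The plan is to realize $\pa_u\Psi(t,t)$ as the velocity vector of a curve that stays inside the level set $\S(\psi(t))$, so that tangency becomes automatic. I would fix the second argument at $v=t$ and consider the one-parameter family $u\mapsto\Psi(u,t)=\sum_k\sqrt{p_k(t)}\ketbra{\psi_k(u)}{k}$; it passes through $\psi(t)=\Psi(t,t)$ at $u=t$, and its velocity there is precisely $\pa_u\Psi(t,t)$.

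The key step is to show that $\Psi(u,t)^\dagger\Psi(u,t)$ does not depend on $u$. Expanding this product and using that $\{\ket{\psi_k(u)}\}$ is orthonormal for every $u$, so that $\braket{\psi_k(u)}{\psi_l(u)}=\delta_{kl}$, all cross terms should collapse and leave $\Psi(u,t)^\dagger\Psi(u,t)=\sum_kp_k(t)\ketbra{k}{k}$, which is manifestly $u$-independent. Since $\J$ sees its argument only through $\psi^\dagger\psi$ — recall $\J(\phi)\xi=i\hbar\Tr(\phi^\dagger\phi\,\xi)$ — this gives $\J(\Psi(u,t))=\J(\psi(t))$ for all $u$. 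Hence the whole curve lies in $\S(\psi(t))=\J^{-1}(\J(\psi(t)))$, and its velocity $\pa_u\Psi(t,t)$ belongs to $\T_{\psi(t)}\S(\psi(t))$.

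As an alternative route I could verify tangency directly from the description $\T_{\psi(t)}\S(\psi(t))=\Ker d\J_{\psi(t)}=\{X:X^\dagger\psi(t)+\psi(t)^\dagger X=0\}$ recorded in \eqref{kernel}. Writing $X=\pa_u\Psi(t,t)=\sum_k\sqrt{p_k(t)}\ketbra{\dot{\psi_k}(t)}{k}$ and inserting the spectral form of $\psi(t)$, the coefficient of $\ketbra{k}{l}$ in $X^\dagger\psi(t)+\psi(t)^\dagger X$ should collect into $\sqrt{p_k(t)p_l(t)}\,\dd{t}\braket{\psi_k(t)}{\psi_l(t)}$, which vanishes because $\braket{\psi_k(t)}{\psi_l(t)}=\delta_{kl}$ is constant in $t$.

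I do not anticipate a serious obstacle. The one point demanding care is to recognize that it is exactly the orthonormality of the smoothly varying eigenbasis that kills the $u$-dependence of $\Psi(u,t)^\dagger\Psi(u,t)$ — equivalently, that makes $\dd{t}\braket{\psi_k(t)}{\psi_l(t)}$ vanish — and to exploit the description of $\S(\psi)$ as a level set of $\J$ rather than trying to parametrize it by hand.
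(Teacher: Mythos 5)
Your proposal is correct and follows the paper's own argument exactly: the paper's proof likewise observes that $u\mapsto\Psi(u,t)$ is a curve in $\S(\psi(t))$ passing through $\psi(t)$ at $u=t$, and you have merely filled in the (correct) verification that orthonormality of the eigenbasis makes $\Psi(u,t)^\dagger\Psi(u,t)$ independent of $u$. The alternative check via $\Ker d\J_{\psi(t)}$ is also sound, but the first route is the one the paper takes.
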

\begin{proof}
The assertion follows from the observation that $u\mapsto \Psi(u,t)$ is a 
curve in $\S(\psi(t))$ which passes through $\psi(t)$ when $u=t$.
\end{proof}
\begin{proposition}\label{K}
$\pa_v\Psi(t,t)$ belongs to $\K_{\psi(t)}\S^\circ(\HH)$. 
\end{proposition}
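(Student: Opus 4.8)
The plan is to check the two defining properties of $\K_{\psi(t)}\S^\circ(\HH)$ in turn: that $\pa_v\Psi(t,t)$ is tangent to $\S^\circ(\HH)$, and that it is $G$-orthogonal to $\T_{\psi(t)}\Q(\psi(t))$. Writing $W=\pa_v\Psi(t,t)$ and $\psi=\psi(t)$ for brevity, I would first dispose of tangency exactly as in the proof of Proposition \ref{T}: the curve $v\mapsto\Psi(t,v)$ lies in $\S^\circ(\HH)$, since orthonormality of the eigenframe gives $\Psi(t,v)^\dagger\Psi(t,v)=\sum_k p_k(v)\ketbra{k}{k}$, whose trace is $1$ for all $v$ (and whose entries stay positive near $v=t$ because $\rho$ is invertible). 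Hence its velocity $W$ at $v=t$ sits in $\T_\psi\S^\circ(\HH)$.

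For the orthogonality I would invoke Proposition \ref{dimensionofQ}, which gives $\T_\psi\Q(\psi)=\V_\psi\S^\circ(\HH)+\Ker d\J_\psi$, so that it suffices to verify $G(W,X)=0$ separately for $X\in\V_\psi\S^\circ(\HH)$ and for $X\in\Ker d\J_\psi$. The key preliminary observation is that $W$ factors as $W=\psi B$, where
\[
B=\tfrac12\sum_k\frac{\dot p_k(t)}{p_k(t)}\,\ketbra{k}{k}
\]
is a Hermitian operator that commutes with $\psi^\dagger\psi=\sum_k p_k(t)\ketbra{k}{k}$. This is immediate from \eqref{v} together with $\psi\ket{k}=\sqrt{p_k(t)}\ket{\psi_k(t)}$. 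Consequently $\psi^\dagger W=\psi^\dagger\psi\,B$ is Hermitian, being a product of commuting Hermitian operators, and $W^\dagger=B\psi^\dagger$.

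Now recall that $G(X,Y)=2\hbar\Re\Tr(X^\dagger Y)$ by \eqref{metric}, and that the two summands are described by \eqref{kernel} and \eqref{vertical}. For $X\in\Ker d\J_\psi$ one has $X^\dagger\psi+\psi^\dagger X=0$, i.e.\ $\psi^\dagger X$ is skew-Hermitian; then $\Tr(W^\dagger X)=\Tr\!\big(B\,\psi^\dagger X\big)$ is the trace of a Hermitian operator times a skew-Hermitian one, hence purely imaginary, so $G(W,X)=0$. For $X\in\V_\psi\S^\circ(\HH)$, write $X=\psi\xi$ with $\xi$ skew-Hermitian; then $\Tr(W^\dagger X)=\Tr\!\big(B\,\psi^\dagger\psi\,\xi\big)$, and since $B\psi^\dagger\psi$ is Hermitian while $\xi$ is skew-Hermitian, this trace is again purely imaginary and $G(W,X)=0$. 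Combined with tangency, this establishes $W\in\K_\psi\S^\circ(\HH)$.

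The only genuine content is the factorization $W=\psi B$ and the fact that $B$ commutes with $\psi^\dagger\psi$; once these are in hand the rest collapses to the single elementary fact that $\Tr(\text{Hermitian}\times\text{skew-Hermitian})$ is purely imaginary. The main thing to be careful about is the repeated use of the orthonormality $\braket{\psi_k(t)}{\psi_l(t)}=\delta_{kl}$ of the spectral frame, which is precisely what renders $\psi^\dagger\psi$ and $B$ simultaneously diagonal in the fixed basis $\{\ket{k}\}$ and hence commuting, and keeping the \emph{left} condition \eqref{vertical} defining $\V_\psi\S^\circ(\HH)$ cleanly separated from the \emph{right} condition \eqref{kernel} defining $\Ker d\J_\psi$.
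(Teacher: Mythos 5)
Your proof is correct and follows the same strategy as the paper's: both establish orthogonality to $\V_{\psi(t)}\S^\circ(\HH)$ and to $\Ker d\J_{\psi(t)}$ separately and then conclude via Proposition~\ref{dimensionofQ}. The only real difference is in packaging: where the paper carries out explicit component computations in the eigenframe $\{\ket{\psi_k(t)}\}$, you encode them in the factorization $\pa_v\Psi(t,t)=\psi(t)B$ with $B$ Hermitian and commuting with $\psi(t)^\dagger\psi(t)$, after which both orthogonality checks collapse to the observation that the trace of a Hermitian operator times a skew-Hermitian one is purely imaginary.
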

\begin{proof}
First we verify that $\pa_v\Psi(t,t)$ belongs to $\H_{\psi(t)}\S^\circ(\HH)$, and hence is perpendicular to $\V_{\psi(t)}\S^\circ(\HH)$:
\begin{equation}
\begin{split}
	\pa_v&\Psi(t,t)^\dagger\psi(t)-\psi(t)^\dagger \pa_v\Psi(t,t)=\\
	&=\sum_{k,l}\left(\sqrt{p_l(t)}\dd{t}\left(\!\sqrt{p_k(t)}\right)-\sqrt{p_k(t)}\dd{t}\left(\!\sqrt{p_l(t)}\right)\right)\ket{k}\braket{\psi_k(t)}{\psi_l(t)}\bra{l}=0.
\end{split}
\end{equation}
Next we prove that $\pa_v\Psi(t,t)$ is also perpendicular to $\Ker d\J_{\psi(t)}$.
For this, let $\phi(s)$ be a curve in $\L^\circ(\HH)$ such that $\phi(t)=\psi(t)$ and $d\J(\dot{\phi}(t))=0$. Write 
\begin{equation}
	\phi(s)=\sum_k\ketbra{\phi_k(s)}{k}.
\end{equation} 
According to \eqref{kernel},  
\begin{equation}
    \begin{split}
        0
	    &=\dot{\phi}(t)^\dagger\psi(t)+\psi(t)^\dagger\dot{\phi}(t)\\
	    &=\sum_{k,l}\ket{k}\Big(\sqrt{p_l(t)}\braket{\dot{\phi_k}(t)}{\psi_l(t)}+\sqrt{p_k(t)}\braket{\psi_k(t)}{\dot{\phi_l}(t)}\Big)\bra{l}.
	\end{split}
\end{equation}
Thus $\braket{\dot{\phi_k}(t)}{\psi_k(t)}+\braket{\psi_k(t)}{\dot{\phi_k}(t)}=0$ for every $k$. Now,
\begin{equation}
	\begin{split}
	\pa_v&\Psi(t,t)^\dagger\dot{\phi}(t)+\dot{\phi}(t)^\dagger \pa_v\Psi(t,t)=\\
	&=\sum_{k,l}\ket{k}\Bigg(\dd{t}\left(\!\sqrt{p_k(t)}\right)\braket{\psi_k(t)}{\dot{\phi_l}(t)}+\dd{t}\left(\!\sqrt{p_l(t)}\right)\braket{\dot{\phi_k}(t)}{\psi_l(t)}\Bigg)\bra{l}.
	\end{split}
\end{equation}
Consequently,
\begin{equation}
	G(\pa_v\Psi(t,t),\dot{\phi}(t))=\hbar\sum_k\dd{t}\left(\!\sqrt{p_k(t)}\right)\Big(\braket{\psi_k(t)}{\dot{\phi_k}(t)}+\braket{\dot{\phi_k}(t)}{\psi_k(t)}\Big)=0.
\end{equation}
The assertion follows from Proposition \ref{dimensionofQ}.
\end{proof}

It is now straightforward to show that the two definitions \eqref{opengp} and \eqref{gp} agree.
For if $\rho(t)$ is given by \eqref{r}, and the curves $\ket{\psi_k(t)}$ satisfy the parallelism condition \eqref{pc}, we define $\psi(t)$ as in \eqref{p}. According to Theorem \ref{main}, then,  $\psi(t)$ is a horizontal lift of $\rho(t)$, and by definition \eqref{gp},
\begin{equation}
	\begin{split}
		\gamma_g[\rho(t)]
		&=\arg\Tr\left(\psi(0)^\dagger\psi(\tau)\right)\\
		&=\arg\Tr\sum_{k,l}\sqrt{p_k(0)p_l(\tau)}\ket{k}\braket{\psi_k(0)}{\psi_l(\tau)}\bra{l}\\
		&=\arg\sum_k\sqrt{p_k(0)p_k(\tau)}\braket{\psi_k(0)}{\psi_k(\tau)}.
	\end{split}
\end{equation}

\begin{example}
Consider a qubit system which with respect to a computational basis $\{\ket{1},\ket{2}\}$ is described by the curve of density operators
\begin{equation}
\rho(t)=\frac{1}{2}
\begin{pmatrix}
1+z(t) & x(t)-iy(t) \\ x(t)+iy(t) & 1-z(t)
\end{pmatrix},\quad (0\leq t\leq 2\pi),
\end{equation}
where 
\begin{equation}
x(t)=\frac{1}{2}\cos t,\quad 
y(t)=\frac{1}{4}\sin t,\quad
z(t)=\frac{1}{2}\sin^2(t/2).
\end{equation}
Figure \ref{figure:trefoil} shows the curve in the Bloch ball traced out by the Bloch vector of  $\rho(t)$.
\begin{figure}
	\centering
	\includegraphics[width=0.35\textwidth]{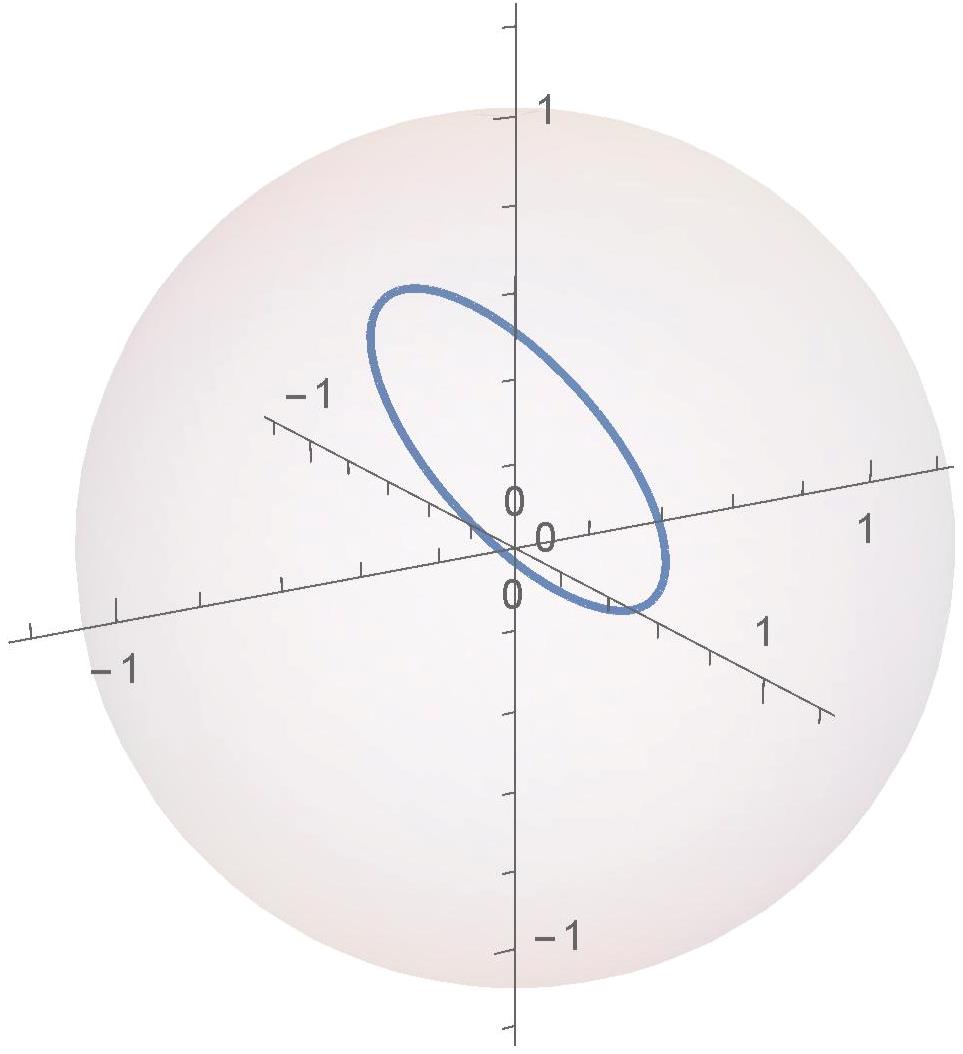}
	\caption{A curve in the Bloch ball.}
	\label{figure:trefoil}
\end{figure}
This vector has at each instant $t$ the length
\begin{equation}
r(t)=\sqrt{x(t)^2+y(t)^2+z(t)^2}=\frac{1}{4}\sqrt{4\cos^2 t+\sin^2 t+4\sin^4(t/2)},
\end{equation}
and standard formulas tells us that a spectral decomposition for $\rho(t)$ is given by 
\begin{equation}
\rho(t)=p_1(t)\ketbra{\psi_1(t)}{\psi_1(t)}+p_2(t)\ketbra{\psi_2(t)}{\psi_2(t)},
\end{equation}
where 
\begin{equation}
p_1(t)=\frac{1}{2}(1+r(t)),\quad p_2(t)=\frac{1}{2}(1-r(t)),
\end{equation}
and 
\begin{align}
& \ket{\psi_1(t)}=\frac{1}{\sqrt{2r(t)(r(t)+z(t))}}
\begin{pmatrix}
r(t)+z(t)\\
x(t)+iy(t)
\end{pmatrix},\\
& \ket{\psi_2(t)}=\frac{1}{\sqrt{2r(t)(r(t)+z(t))}}
\begin{pmatrix}
-x(t)+iy(t)\\
r(t)+z(t)
\end{pmatrix}.
\end{align}
A lift of $\rho(t)$ is then given by
\begin{equation}\label{tre}
\psi(t)=\sqrt{p_1(t)}\ketbra{\psi_1(t)}{1}+\sqrt{p_2(t)}\ketbra{\psi_2(t)}{2}.
\end{equation}
This lift, however, is not horizontal. But to make it horizontal 
it is sufficient, according to Theorem \ref{main}, to replace 
the eigenvectors in \eqref{tre} by 
a pair of phase shifted ones,
\begin{equation}
\ket{\phi_1(t)}=\ket{\psi_1(t)}e^{i\theta_1(t)},\quad 
\ket{\phi_2(t)}=\ket{\psi_2(t)}e^{i\theta_2(t)},
\end{equation}
where the phases are such that the new eigenvectors fulfill the parallel transport condition
$\braket{\phi_1(t)}{\dot{\phi}_1(t)}=\braket{\phi_2(t)}{\dot{\phi}_2(t)}=0$.
The so obtained curve
\begin{equation}
\phi(t)=\sqrt{p_1(t)}\ketbra{\phi_1(t)}{1}+\sqrt{p_2(t)}\ketbra{\phi_2(t)}{2}
\end{equation}
\emph{is} a horizontal lift of $\rho(t)$, and by \eqref{gp}, the geometric phase of $\rho(t)$ equals
\begin{equation}
\gamma_g[\rho(t)]=
\arg\left(\sqrt{p_1(0)p_1(2\pi)}\braket{\phi_1(0)}{\phi_1(2\pi)}
+\sqrt{p_2(0)p_2(2\pi)}\braket{\phi_2(0)}{\phi_2(2\pi)}\right).
\end{equation}
Suitable phase factors are
\begin{equation}
\theta_1(t)=i\int_{0}^{t}\ds\,\braket{\psi_1(s)}{\dot{\psi}_1(s)},\quad \theta_2(t)=i\int_{0}^{t}\ds\,\braket{\psi_2(s)}{\dot{\psi}_2(s)},
\end{equation}
and a numerical calculation yields
\begin{equation}
\gamma_g[\rho(t)]\approx
\arg\left(0.17 - 0.49 i\right)\approx 5.0.
\end{equation}
\end{example}

\section{Summary and outlook}
In this paper we have examined the geometry of Uhlmann's standard purification bundle using tools from the theory of symmetries of dynamical systems. The examination revealed the existence of a natural connection in the standard purification bundle which is different from that of Uhlmann, and which gives rise to the geometric phase for mixed quantum states proposed by Sj\"{o}qvist \emph{et al.}~and Tong \emph{et al.}
The approach used here is purely abstract, and thus it shows that the definition of the geometric phase of Sjöqvist-Tong \emph{et al.}, fundamentally, does not rest upon the ability to describe evolving mixed states as curves of incoherent superpositions of pure states.

Another approach to geometric phase which, at least at first glance, do rest upon the fact that mixed states can be written as \emph{homogeneous} incoherent superpositions of pure states was proposed in \cite{Rezakhani_etal2006}. In that paper a unifying formalism for geometric phase was developed to illuminate the relationship between Uhlmann's phase and the phase of Sj\"oqvist-Tong \emph{et al.} The authors of the current paper would like to know, and intend to investigate whether the definition of geometric phase in \cite{Rezakhani_etal2006} can be derived from a gauge theory for mixed states which unifies Uhlmann's approach to geometric phase and the one presented here.

\vspace{10pt}
\end{document}